\numberwithin{equation}{section}
\numberwithin{figure}{section}
\theoremstyle{plain}
\newtheorem*{thm*}{Theorem}
\newtheorem{thm}{Theorem}[section]
\newtheorem{lem}[thm]{Lemma}
\newtheorem{prop}[thm]{Proposition}
\newtheorem{conj}[thm]{Conjecture}
\theoremstyle{definition}
\newtheorem{defn}[thm]{Definition}
\newtheorem*{defn*}{Definition}
\crefname{lemma}{lemma}{lemmas}
\Crefname{lemma}{Lemma}{Lemmas}
\crefname{thm}{theorem}{theorems}
\Crefname{thm}{Theorem}{Theorems}
\crefname{defn}{definition}{definitions}
\Crefname{defn}{Definition}{Definitions}
\DeclarePairedDelimiterX{\abs}[1]{\lvert}{\rvert}{\ifblank{#1}{{}\cdot{}}{#1}}
\newtheorem*{thm:main}{Theorem \ref{thm:main}}
\newtheorem*{thm:prop}{Proposition \ref{thm:prop}}
\begin{document}

\begin{titlepage}
\vspace*{-3cm} 
\begin{center}
\vspace{2.2cm}
{\LARGE\bfseries Large $N$ von Neumann algebras and the renormalization of Newton's constant}\\
\vspace{1cm}
{\large
Elliott Gesteau\\}
\vspace{.6cm}
{ Division of Physics, Mathematics, and Astronomy, California Institute of Technology}\par\vspace{-.3cm}
\vspace{.4cm}

\scalebox{.95}{\tt  egesteau@caltech.edu}\par
\vspace{1cm}
{\bf{Abstract}}\\
\end{center}
I derive a family of Ryu--Takayanagi formulae that are valid in the large $N$ limit of holographic quantum error-correcting codes, and parameterized by a choice of UV cutoff in the bulk. The bulk entropy terms are matched with a family of von Neumann factors nested inside the large $N$ von Neumann algebra describing the bulk effective field theory. These factors are mapped onto one another by a family of conditional expectations, which are interpreted as a renormalization group flow for the code subspace. Under this flow, I show that the renormalizations of the area term and the bulk entropy term exactly compensate each other. This result provides a concrete realization of the ER=EPR paradigm, as well as an explicit proof of a conjecture due to Susskind and Uglum.
\\
\vfill 
\end{titlepage}

\tableofcontents
\newpage
\section{Introduction}

The quantum extremal surface (QES) formula \cite{Ryu:2006bv,Hubeny:2007xt,Lewkowycz:2013nqa,Engelhardt:2014gca,Harlow:2016vwg,Dong_2018} is one of the most important results in holography. It is a powerful probe of the emergent geometry of spacetime, as it relates the area of a surface in the bulk to the entanglement entropy of the boundary dual region. More precisely, if $\rho$ is the state of the boundary theory, it holds that
\begin{align}
    S(\rho)=\frac{A(\Sigma)}{4G_N}+S_{bulk}(\rho),
\end{align}
where $S(\rho)$ is the UV-complete entanglement entropy of the boundary state, $A(\Sigma)$ is the area of the \textit{quantum extremal surface} associated to $\rho$, $G_N$ is Newton's gravitation constant, and $S_{bulk}(\rho)$ is the entanglement entropy of $\rho$ \textit{in the bulk effective field theory}.

As crucial as this result is, the precise definitions of the terms in the QES formula remain elusive, and some paradoxes seem to arise. Namely:

\begin{itemize}
    \item $S_{bulk}(\rho)$ is divergent in the bulk effective field theory. Indeed, it is a general property of quantum field theory that the entanglement entropy of bulk subregions diverges, and one needs to regulate it with a UV cutoff. A more abstract formulation of this fact is that the von Neumann algebra of the bulk effective field theory has type $III_1$, or $II_\infty$ if one includes perturbative corrections and quantizes the ADM mass \cite{Leutheusser:2021frk,Leutheusser:2021qhd,Witten:2021unn}.\footnote{Technically, the same kind of issue arises for the boundary entropy associated to a nontrivial subregion. However, when considering two boundary theories in the thermofield double state, or in a large class of more general entangled states, this kind of divergence doesn't occur, and this is less of a serious problem in one wants to get to a conceptual understanding of holography. As a result, I will mostly consider boundary regions that are dual to one side of a black hole, and ignore this extra complication here.} 
    \item The smooth spacetime description of the bulk is only valid in the $G_N=0$ limit. In this limit, the area term also blows up and it is not clear what approximations need to be made in the bulk to consider $G_N$ small but nonzero.
    \item The bulk effective field theory seems to perform a calculation at $G_N=0$, or at least perturbatively in $G_N$. However, on the boundary, the calculation of $S(\rho)$ corresponds to a calculation in a conformal field theory at \textit{large but finite} $N$, so at \textit{small but finite} $G_N$, as $G_N\sim\frac{1}{N^2}$ from the holographic dictionary. At $N=\infty$, the left hand side actually also blows up, and the QES formula loses its meaning.
\end{itemize}

In order to solve these puzzles, it seems that the crucial issue is to understand how to deal with the finiteness of Newton's constant on the boundary, as well as with the UV cutoff of the bulk fields. An interesting proposal in the case of the exterior region of a black hole, originally due to Susskind--Uglum \cite{Susskind_1994}, and developed in an important body of work (see for example \cite{Kabat:1995eq,Demers:1995dq,Jacobson:1994iw,Larsen:1995ax,Fursaev:1994ea}), is that these two issues are actually related. More precisely, the proposal is the following:  
\begin{conj}[Susskind--Uglum, \cite{Susskind_1994}]
The renormalization of the bulk entropy due to the bulk UV cutoff exactly cancels out the renormalization of Newton's constant in the area term.
\end{conj}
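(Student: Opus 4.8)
\emph{Overall plan.} I would deduce the conjecture from the family of Ryu--Takayanagi formulae constructed in the paper: for every choice of bulk ultraviolet cutoff $\Lambda$,
\[
S(\rho)\;=\;\Big\langle\frac{\hat A_\Lambda(\Sigma)}{4G_N(\Lambda)}\Big\rangle_{\!\rho}\;+\;S_\Lambda(\rho),
\]
where $S_\Lambda(\rho)$ is the entropy of $\rho$ computed inside the type $II$ factor $\mathcal N_\Lambda$ nested in the large $N$ bulk algebra, and $\hat A_\Lambda$, $G_N(\Lambda)$ are the area operator and Newton constant renormalized at scale $\Lambda$. The decisive point is that the left-hand side is the ultraviolet-complete entropy of a \emph{boundary} state and carries no reference at all to the bulk cutoff. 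Hence, for two cutoffs $\Lambda>\Lambda'$ related by the conditional expectation $E_{\Lambda\Lambda'}\colon\mathcal N_\Lambda\to\mathcal N_{\Lambda'}$,
\[
\Big\langle\frac{\hat A_{\Lambda'}}{4G_N(\Lambda')}-\frac{\hat A_\Lambda}{4G_N(\Lambda)}\Big\rangle_{\!\rho}\;=\;S_\Lambda(\rho)-S_{\Lambda'}(\rho),
\]
which says exactly that the renormalization of the area term cancels the renormalization of the bulk entropy as one flows along the tower of codes. The real content is then twofold: (a)~establish the family of RT formulae with a genuinely $\Lambda$-independent left-hand side, and (b)~show that each side of the last display is individually nonzero and identify the common value with the index data of $E_{\Lambda\Lambda'}$ --- without (b) the cancellation is empty, and only once both sides are matched to the known ultraviolet behaviour does one recover the physical content of the Susskind--Uglum statement.

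\emph{The RT formula at each cutoff.} For (a) I would exploit the holographic quantum error-correction structure. The cutoff code subspace is arranged so that $\mathcal N_\Lambda$ is the image of a boundary subalgebra under the (approximate) encoding map, with $\hat A_\Lambda$ appearing as the extra generator promoting the formally type $III_1$ bulk algebra to the type $II$ factor $\mathcal N_\Lambda$ --- a crossed product by the modular flow, equivalently by the ADM Hamiltonian, along the lines of the type $II$ constructions referenced in the introduction. In a type $II$ factor the von Neumann entropy is $S_\Lambda(\rho)=-\langle\log\hat\rho_\Lambda\rangle_\rho$ for the trace density $\hat\rho_\Lambda$; feeding this into the standard identity equating the boundary modular Hamiltonian with the bulk modular Hamiltonian plus $\hat A_\Lambda/4G_N(\Lambda)$ and taking the expectation value in $\rho$ yields the displayed RT formula, the cutoff-dependent normalization of $\mathrm{Tr}_\Lambda$ being precisely what is repackaged into $G_N(\Lambda)$. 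Because in the black-hole/thermofield-double setting of the footnote ``$S(\rho)$'' is the renormalized type $II_\infty$ entropy of the boundary system --- a finite quantity defined intrinsically on the boundary --- its $\Lambda$-independence is automatic, and the entire $\Lambda$-dependence lives in the split between $\hat A_\Lambda/4G_N(\Lambda)$ and $S_\Lambda$.

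\emph{Matching the two renormalizations.} For (b), the conditional expectation $E_{\Lambda\Lambda'}$ --- which implements integrating out the shell of bulk modes between $\Lambda'$ and $\Lambda$ --- rescales the trace, $\mathrm{Tr}_\Lambda|_{\mathcal N_{\Lambda'}}=\mu_{\Lambda\Lambda'}\,\mathrm{Tr}_{\Lambda'}$, for a positive number $\mu_{\Lambda\Lambda'}$: the (generalized) index of the inclusion, whose logarithm is the maximal entropy the integrated-out modes can carry. A short manipulation with $\hat\rho_{\Lambda'}=\mu_{\Lambda\Lambda'}\,E_{\Lambda\Lambda'}(\hat\rho_\Lambda)$ gives
\[
S_\Lambda(\rho)-S_{\Lambda'}(\rho)\;=\;\log\mu_{\Lambda\Lambda'}\;-\;S\big(\rho\,\big\|\,\rho\circ E_{\Lambda\Lambda'}\big),
\]
the relative-entropy term measuring correlations between the shell and the rest of the code. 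By the RT formula this equals $\big\langle\hat A_{\Lambda'}/4G_N(\Lambda')-\hat A_\Lambda/4G_N(\Lambda)\big\rangle_\rho$, and comparing the two sides does the job: the state-independent piece $\log\mu_{\Lambda\Lambda'}$ is the renormalization of $1/4G_N$ --- it is what the RT formula produces on the bulk reference state, for which the relative-entropy term vanishes --- which is the induced-gravity/Sakharov contribution; the state-dependent relative-entropy term must then be accounted for by the fluctuating, operator-valued nature of $\hat A_\Lambda$, and pinning down exactly how is itself part of the content of the RT formula at each cutoff. Reading off the leading growth as $\Lambda\to\infty$ --- $\delta\!\left(1/G_N(\Lambda)\right)$ scaling as a positive power of $\Lambda$ set by the bulk dimension --- reproduces the textbook statement that the ultraviolet entanglement divergence renormalizes Newton's constant, i.e. the conjecture.

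\emph{Main obstacle.} The hard part is the index $\mu_{\Lambda\Lambda'}$ and the limit $\Lambda\to\infty$. The bulk effective field theory has infinitely many modes, so ``integrating out a shell'' is generically an infinite-index inclusion and $\log\mu_{\Lambda\Lambda'}$ is itself ultraviolet-divergent --- it \emph{is} the area divergence. One must either realize the renormalization-group flow as a controlled limit of genuinely finite-index steps (a nested tower of type $II$ subfactors with finite Jones--Kosaki index), or renormalize the index directly and show that the \emph{differences} entering the RT formula remain finite --- in effect that the divergence of $\log\mu_{\Lambda\Lambda'}$ and that of $1/G_N(\Lambda)$ are one and the same. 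A secondary subtlety, which I expect the crossed-product construction to settle, is keeping track at each stage of whether $\hat A_\Lambda$ acts as a c-number or as a nontrivial central/relative operator, since only in the latter case is there room for the state-dependent relative-entropy term in the matching above.
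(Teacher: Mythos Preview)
Your high-level strategy --- establish an RT formula at each cutoff, note that the boundary entropy on the left is cutoff-independent, and subtract to obtain the cancellation --- is exactly the logical skeleton of the paper's proof (Theorem~\ref{thm:SusskindUglum}). In that sense you have identified the right mechanism.

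However, your technical implementation diverges substantially from the paper's, and this matters. You model the regulated bulk algebras $\mathcal{N}_\Lambda$ as type~$II$ factors obtained by a crossed product, with an area \emph{operator} $\hat A_\Lambda$ as the extra generator, and you propose to compute the entropy shift via the Jones--Kosaki index of the inclusion $\mathcal{N}_{\Lambda'}\subset\mathcal{N}_\Lambda$. The paper does none of this. Its regulated algebras $M_\lambda$ are type~$I$ factors (indeed, the paper argues in Appendix~\ref{sec:BoundedTypeI} that $O(1)$ bulk entropy \emph{forces} type~$I$); the crossed-product type~$II_\infty$ algebra is the \emph{unregulated} object, not one of the $M_\lambda$. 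The area term is not an operator in the algebra but the entropy of the Choi--Jamiolkowski state of the code, following \cite{Akers_2022}. The RT formula at each scale is proven by Fannes' inequality and the Peter--Weyl trick (Theorem~\ref{thm:main}), not by a modular-Hamiltonian identity. And the entropy shift under RG flow comes from the elementary tensor factorization of Proposition~\ref{prop:cexptypei} (equation~\eqref{eq:renormalizelite}), which for type~$I$ factors is just additivity of entropy under tensor products --- no index theory is needed.

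The genuine gap in your route is the one you flag yourself: the index $\mu_{\Lambda\Lambda'}$ is infinite for the inclusions you envisage, and your formula $S_\Lambda(\rho)-S_{\Lambda'}(\rho)=\log\mu_{\Lambda\Lambda'}-S(\rho\|\rho\circ E_{\Lambda\Lambda'})$ then equates a finite quantity to a difference of two infinities, which you do not control. The paper avoids this entirely: because the $M_\lambda$ are type~$I$ and the conditional expectation is just a partial trace against a fixed state $\ket{\chi_{\lambda\mu}}$, the entropy difference is simply $S(\ket{\chi_{\lambda\mu}},M_{\lambda\mu})$, a manifestly finite number. Your secondary worry about whether $\hat A_\Lambda$ is central or operator-valued also dissolves in the paper's setup, since the area is a c-number attached to the code (state-independent within $\mathcal{H}_\lambda$, but code-dependent), not an element of the bulk algebra.
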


In other words, the choice of UV cutoff for the fields in the bulk is not independent of the running of Newton's constant, and the renormalization of one term of the QES formula in the bulk exactly cancels out the renormalization of the other term, making the right hand side of the QES formula only dependent on the value of Newton's constant (i.e. $N$) on the boundary, and not on the choice of UV cutoff in the bulk EFT. For $N$ large enough, the EFT description is then approximately valid, making it possible to obtain a well-defined, and cutoff-independent, QES formula. The way that Susskind--Uglum originally argued for this proposal is by resorting to EFT arguments and to the Euclidean path integral. These arguments are not fully rigorous and only valid in specific cases.

A more modern way to understand the QES formula, which reduces to the formula for black hole entropy considered by Susskind--Uglum in the case of a two-sided black hole, is through the lens of quantum error correction in AdS/CFT \cite{Almheiri:2014lwa,Harlow:2016vwg}. In this context, the bulk term of the QES formula is reinterpreted as the entanglement entropy of a state in a \textit{code subspace} of the full UV-complete Hilbert space. The area term then captures the entropy of the state which is \textit{not} associated with encoded observables, but instead, with other degrees of freedom in the CFT that do not appear in the bulk fields. The interpretation is that it is the entanglement of these UV degrees of freedom that conspires to create a geometry in the bulk and make spacetime emerge.

The early breakthroughs in holographic quantum error correction \cite{Pastawski_2015,Harlow:2016vwg,Cotler:2017erl,Hayden_2019} were mainly achieved in the context of toy models, that approximate the full-fledged holographic situation in terms of qubits and finite-dimensional Hilbert spaces. While these models already retain a lot of the important properties of holography, their discreteness makes it difficult to tackle the continuous nature of spacetime in the bulk EFT, the notion of large $N$ limit, and the infinite-dimensional nature of the boundary Hilbert space. However, recent progress \cite{Kang:2018xqy,Gesteau:2020rtg,Faulkner:2020hzi,Gesteau:2021jzp,FaulknerLi} has allowed to move past the finite-dimensional case by recasting holographic quantum error correction in the language of infinite-dimensional von Neumann algebras. In this more general language, it is now possible to precisely define the aforementioned notions. The goal of this paper is to show that this new language is enough to construct a framework in which the Susskind--Uglum conjecture can be formulated and proven.

The first task will be to understand how to obtain a formulation of holographic quantum error correction in the context of the large $N$ limit of AdS/CFT, in a way in which it is possible to derive an entropy formula. A first step towards this goal has been taken by Faulkner and Li \cite{FaulknerLi}, and in particular, it has been shown that it is possible to derive the JLMS formula and the correspondence between bulk and boundary modular flows \cite{Jafferis:2015del} in the large $N$ limit. 

The new input of this work will be to note that while the notions of convergence introduced in \cite{FaulknerLi} allow to derive results like the JLMS formula in the large $N$ limit, they are too loose to define a notion of code subspace that is robust enough to satisfy a formula that involves von Neumann entropies in the bulk, like the Ryu--Takayanagi formula. For this latter purpose, the large $N$ bulk von Neumann algebra instead needs to be regulated in order to isolate observables that contribute to the bulk entropy in the large $N$ limit. More precisely, I will construct a family of type $I$ von Neumann algebras which retain a finite amount of large $N$ bulk entropy. These algebras will be nested inside one another and related through conditional expectations, which will implement a renormalization group flow of code subspaces.

Once this new setup based on conditional expectations is introduced at the level of the bulk theory at large $N$, the next step will be to introduce the bulk-to-boundary maps, which, along the same lines as \cite{Akers_2022}, relate the semiclassical bulk theory to finite $N$ boundary theories. Under physically motivated assumptions, I will show that this newly introduced family of codes satisfies an asymptotic entropy formula when $N\rightarrow\infty$ on the boundary. An important ingredient will be the definition of the area term for an approximate quantum error-correcting code proposed in \cite{Akers_2022}, in terms of the entropy of the Choi--Jamiolkowski state associated to the code. Thanks to the regulation of the code subspace, each individual term of the formula will be well-defined. The conditional expectations in the bulk will concretely implement the renormalization group flow for bulk entropy, and the invariance of the entropy formula under this flow will yield a rigorous proof of the Susskind--Uglum conjecture. 

For clarity, the main findings of the paper are summarized below:

\begin{itemize}
    \item This work provides a construction of a setup in which holographic entropy formulae as well as a renormalization scheme for the code subspace can be rigorously defined in the large $N$ limit of holography, and the Susskind--Uglum conjecture can be proven.
    \item It clarifies some subtleties about the type of the bulk von Neumann algebras at large $N$. While the full, unregulated bulk algebra has type $II_\infty$ or $III_1$, the pertinent algebras contributing to the bulk entropy have type $I$ as long as this entropy is $O(1)$ in the large $N$ limit.
    \item The role of conditional expectations in holography will be clarified and further extended. In particular, it will be shown that they can implement the renormalization group flow for bulk degrees of freedom. This role is somewhat related to the original proposal of \cite{Faulkner:2020hzi} that the boundary-to-bulk map should be modelled by a conditional expectation. Here, I will argue that while this original picture breaks down at finite $N$ because reconstruction becomes approximate, it can still be made sense of at large $N$, and that the possible code subalgebras are related to each other inside the large $N$ algebra by conditional expectations that implement a renormalization group flow.
    \item This work also puts forward an intimate relationship between the Susskind--Uglum conjecture and the ER=EPR paradigm. In particular, the fact that the entropy formula is invariant under RG flow can be interpreted as the fact that depending on the choice of code subalgebra, some boundary entanglement can be seen either as entanglement in the code subspace or as a contribution to the area term without changing any of the physics. This shows a complete equivalence between entanglement and geometric contributions in this context.
\end{itemize}

The exposition is organized as follows: in Section \ref{sec:EFTHilbert}, the Hilbert space of large $N$ effective field theory in the bulk is introduced. After summarizing recent constructions related to large $N$ von Neumann algebras \cite{Leutheusser:2021frk,Leutheusser:2021qhd,Witten:2021unn} and asymptotically isometric codes \cite{FaulknerLi}, it is argued that the large $N$ von Neumann algebras need to be further regulated if one wants to be able to make sense of holographic entropy formulae. In Section \ref{sec:CodeRenormalize}, a precise notion of code subspace renormalization is constructed. UV-regulated bulk algebras and code subspaces at large $N$ are defined. The renormalization group flow between the algebras is implemented by conditional expectations. In Section \ref{sec:SusskindUglum}, this RG flow of code subspaces is mapped into the boundary theory in the case of finite-dimensional regulated algebras. It is shown that under physically relevant assumptions for the bulk-to-boundary map, an entropy formula is true for this family of codes, regardless of the choice of renormalization scale in the bulk. This provides an explicit proof of the Susskind--Uglum conjecture, and aligns with the ER=EPR proposal. In Section \ref{sec:generalize}, the results of the previous section are generalized to various more general cases, that amount to relaxing some conditions on the dimensions of the various algebras considered. Finally, Section \ref{sec:Discuss} comments on various potential extensions of this work and further directions.

\textbf{Technical warning:} Unless specified otherwise, all von Neumann algebras introduced here are hyperfinite and all Hilbert spaces are separable.

\section{The Hilbert space of effective field theory}
\label{sec:EFTHilbert}

Much of the work on the error-correcting structure of holography has been focusing on the subtle way in which the low-energy effective field theory in the bulk is encoded in the unitary boundary CFT. The increasingly precise interpretation based on quantum codes has proven to be very fruitful to understand delicate issues about the semiclassical limit of gravity, such as the consistency between black hole evaporation and unitarity \cite{Penington:2019npb,https://doi.org/10.48550/arxiv.2207.06536}. This section will review how bulk effective field theory emerges in the large $N$ limit of AdS/CFT, and the error-correcting properties of the mappings of the $N=\infty$ theory (or perturbation theory around it) into large but finite $N$ theories. It will largely be based on the recent developments \cite{Leutheusser:2021frk,Leutheusser:2021qhd,Witten:2021unn,FaulknerLi}.

\subsection{Large $N$ algebras and the crossed product}

It has recently been argued by Leutheusser and Liu \cite{Leutheusser:2021frk,Leutheusser:2021qhd} (see also \cite{Leutheusser:2022bgi} for a more general version of this proposal) that in order to study the emergence of spacetime in the large $N$ limit of AdS/CFT, one needs to consider the von Neumann algebra generated by the single-trace operators of the gauge theory. More precisely, at large $N$, these operators behave like generalized free fields (i.e. their correlation functions factorize but they don't satisfy any equation of motion). One can then consider the GNS representation of the $C^\ast$-algebra of fields in a thermal state (i.e. the thermofield double Hilbert space), and define the corresponding large $N$ algebra as the bicommutant of the GNS representation.

The main conjecture of \cite{Leutheusser:2021frk,Leutheusser:2021qhd} is that the large $N$ von Neumann algebra changes type across the Hawking--Page transition: if $T_{HP}$ is the Hawking--Page temperature, for $T<T_{HP}$, it has type $I$, and for $T>T_{HP}$, it has type $III_1$. The change of type of the large $N$ algebra is associated to the presence of a continuous Källén--Lehmann density, which, in turn, can be related to the emergence of a black hole horizon, the connectedness of the thermofield double state, and the lack of factorizability of the large $N$ Hilbert space. 

It is possible to take this idea one step further. The issue with type $III_1$ algebras is that they do not admit any faithful normal semifinite trace, so it is not possible to define a good notion of entropy on them. It is therefore difficult to express the usual holographic statements that involve entanglement entropy in the bulk. What was shown in \cite{Witten:2021unn} is that adding perturbative $\frac{1}{N}$ corrections and quantizing an extra mode corresponding to the ADM mass of spacetime amounts to deforming the bulk algebra from a type $III_1$ factor to a type $II_\infty$ factor through a standard construction known as the crossed product with the modular automorphism group. Such a construction has proven to be very important in pure mathematics, in the context of the classification of type $III$ factors \cite{Connes:1973}.

In a type $II_\infty$ factor, it is possible to define a one-parameter family of traces, but there is no canonical choice of normalization. Instead, the traces are related to one another by a scaling automorphism. The consequence is that in these algebras, von Neumann entropy is only defined up to an overall constant \cite{Witten:2021unn}. The physical meaning of this is that entropy in a type $II_\infty$ factor is a renormalized version of entropy, where an infinite amount of entanglement has been thrown away. How much entanglement needs to be thrown away is arbitrary and cannot be fixed simply by looking at the type $II_\infty$ factor. This is in sharp contrast with the situation in a type $I$ algebra, which corresponds to the UV-complete description of the algebra on one side of the thermofield double (in AdS/CFT, at finite $N$ any CFT algebra has type $I$). In a type $I$ algebra, the trace is uniquely defined and there is only one way to define entropy, because no infinity needs to be subtracted. What this tells us is that entropy in the crossed product is a coarse-grained quantity that can only be calculated up to an overall constant, whereas entropy in the UV-finite theory is of course uniquely defined.

In \cite{https://doi.org/10.48550/arxiv.2209.10454}, it was shown that entropy in the large $N$ crossed product algebra can be identified with the generalized entropy of a quantum extremal surface, up to an overall constant. The argument requires a formula \cite{Wall:2011hj} relating the generalized entropy at late times to the one on a given time slice. This approach gives a justification for the quantum extremal surface formula directly at the level of the bulk theory in the large $N$ limit (including an extra mode corresponding to the ADM mass). The focus of this paper is different, as it will derive the holographic entropy formula from the perspective of quantum error correction in the large $N$ limit of holographic codes.

\subsection{Embedding into UV-complete theories and quantum error correction}

The perspective here will be to derive a family of Ryu--Takayanagi formulae in the large $N$ limit of holography within the framework of quantum error-correction. A quantum code is essentially the data of a code Hilbert space $\mathcal{H}_{code}$, a boundary Hilbert space $\mathcal{H}_{phys}$, and a bulk-to-boundary map $V:\mathcal{H}_{code}\longrightarrow\mathcal{H}_{phys}$. 

Recently, Faulkner and Li \cite{FaulknerLi} observed that in order to study the large $N$ limit of holography, one does not need to consider a single code, but rather, a sequence of codes. More precisely, if $\mathcal{H}_{code}$ is the bulk Hilbert space at large $N$ (i.e. the Hilbert space on which the large $N$ algebra is represented), then there exist an infinity of boundary Hilbert spaces $\mathcal{H}_{N}$ and bulk to boundary maps $V_N$: one for each choice of $N$. These maps are required to be ``asymptotically isometric": they are required to approach the properties of an isometry, so that an exact quantum error-correcting structure is recovered as $N$ goes to infinity.  In particular, Faulkner and Li impose the following conditions:
\begin{align}
    V_N^\dagger V_N- Id \underset{N\rightarrow\infty,\,w.o.t.}{\longrightarrow}0,
\end{align}
and for all $A\in M_{code}$, where $M_{code}$ is the algebra to reconstruct,
\begin{align}
    \gamma_N(A)V_N- V_NA \underset{N\rightarrow\infty,\,s.o.t.}{\longrightarrow}0,
\end{align}
where $\gamma_N$ is the reconstruction map and the subscripts w.o.t. and s.o.t. indicate that the limits are respectively taken for the \textit{weak} and \textit{strong} operator topologies.

The weak and strong operator topologies does not coincide with the topology induced by the operator norm in infinite dimensions. While this might seem like a technical subtlety, this choice of topology actually carries a lot of physical meaning. The two convergences described above mean that if one \textit{fixes} two states $\ket{\psi}$ and $\ket{\varphi}$ in $\mathcal{H}_{code}$, then the matrix elements $\bra{\psi}V_N^\dagger V_N\ket{\varphi}$ converge towards $\braket{\psi|\varphi}$, and the vectors $(\gamma_N(A)V_N- V_NA)\ket{\psi}$ converge to zero. However, this convergence is \textit{not} required to be uniform in the choice of $\ket{\psi}$ and $\ket{\varphi}$, or $A$. 

This means that for each value of $N$, no matter how large, there may (and do!) exist states for which reconstruction fails dramatically. This makes sense physically: if $N$ is very large but fixed, the effective field theory picture will break down for some operators that scale parametrically with $N$, and the reconstruction map should not be trusted anymore for such high energy excitations. 

This is a first hint that if one wants to relate the entropy of a large but finite $N$ boundary theory to that of the bulk effective field theory at large $N$ like the Ryu--Takayanagi formula does, the full bulk von Neumann algebra may not be considered as a code subalgebra. Rather, this algebra needs to be \textit{regulated}, and only a subset of observables for which one can require stronger reconstruction properties, must be singled out. In other words, {\it the large $N$ algebra must be renormalized.} This introduces an arbitrary choice of regularization. The next step is therefore to introduce a general method to regulate the code subspace of a large $N$ code, and to introduce a renormalization group flow between different choices of regulations.

\section{Code subspace renormalization}
\label{sec:CodeRenormalize}

It is now clear that no matter how large $N$ is chosen in the UV-complete boundary theory, the mapping of the bulk effective field theory into the boundary theory will dramatically fail for some operators. It is then useful to define UV-regulated algebras of observables in the bulk, that can be mapped into the boundary theory with good precision for large enough, but fixed, values of $N$. In this section, I introduce a new notion of \textit{code subspace renormalization}, that allows to define such algebras and compare their respective cutoff scales.

\subsection{What should a consistent renormalization procedure be?}

The first goal is to study which possible von Neumann subalgebras of the large $N$ algebra of observables can be chosen to match the bulk entropy term of the Ryu--Takayanagi formula. At least as a first step, I will assume the standard scaling for bulk entropy 
\begin{align}
    S_{bulk}=O(1),
\end{align}
in the large $N$ limit. It turns out that the fact that a normal state on a von Neumann algebra carries a finite amount of entropy strongly constrains its possible type. Actually, it will be shown in Appendix \ref{sec:BoundedTypeI} that a very mild finiteness of entropy condition implies that the associated von Neumann algebra must have type $I$. This can already be noticed from an intuitive point of view: in the type $II$ and type $III$ cases, divergences of entropies imply that von Neumann entropy is either not possible to define at all, or that the only possible equivalent throws away an infinite amount of entanglement. Hence, it is natural, at least in the case in which one requires the bulk entropy term to be uniformly bounded in $N$, to associate it to a type $I$ von Neumann subalgebra of the large $N$ observables. Note that this is a somewhat more abstract argument that shows that it is not right to consider the whole large $N$ algebra of observables in order to show something like a Ryu--Takayanagi formula.\footnote{If the bulk entropy were to diverge at large $N$, one would still expect to be able to single out a family of type $I$ subalgebras that carry an $N$-dependent amount of entropy.}

The crucial point of this paper is that such a choice of algebra, and Hilbert subspace on which it acts, is highly nonunique. There is a large amount of arbitrariness in how one chooses the subalgebra of the effective field theory that will be considered as the ``code subalgebra" of the holographic code.

However, one cannot choose the reconstructible subalgebra and Hilbert subspace completely arbitrarily. In order to show Ryu--Takayanagi formulae, it is indispensable that a good notion of complementary recovery remains. This implies that the Hilbert subspaces of the large $N$ Hilbert space, and the von Neumann subalgebras, must be chosen so that the commutant structures are compatible with the one of the full large $N$ Hilbert space, and with one another.

The goal of the rest of this section will be to introduce a setup that defines such possible choices of type $I$ subalgebras and relates them with each other in a way that preserves complementarity. The idea is to construct a renormalization scheme by considering a nested family of type $I$ factors that can be projected onto one another by ``integrating out some entanglement". These successive coarse-graining operations can be interpreted as implementing a renormalization group flow. In an operator-algebraic setting, projections of norm one are called conditional expectations. More precisely:\footnote{In some simple cases which will turn out to be relevant here, conditional expectations can be expressed in a more explicit manner, see for example the discussion below Proposition \ref{prop:cexptypei}.}

\begin{defn}
Let $N\subset M$ be an inclusion of von Neumann algebras. A conditional expectation $E:M\rightarrow N$ is a linear map such that $E(Id)=Id$, and for $n_1,n_2\in N$ and $m\in M$, \begin{align}E(n_1mn_2)=n_1E(m)n_2.\end{align}
\end{defn}

Hence, the right structure to look at is a family of conditional expectations $\mathcal{E}_\lambda$, that project the observables of the large $N$ theory $M$ onto some type $I$ subalgebra $M_\lambda$, for $\lambda$ with values in a partially ordered set. The order in this set should be understood as a fine graining direction, so I will assume that for every $\mu\leq\lambda$, there exists a faithful normal conditional expectation $E_{\lambda\mu}:M_\lambda\rightarrow M_\mu$. This family of conditional expectations implements the renormalization group flow of the holographic code. It then turns out that conditional expectations react very well with commutant structures. This is essentially the content of Takesaki's theorem \cite{TAKESAKI1972306}. The idea will be to construct Hilbert spaces of states that are invariant under the successive conditional expectations, and Takesaki's theorem will guarantee that these subspaces are compatible with the commutant structure.

Interestingly, the structure of conditional expectation has already been introduced as a model of exact holographic codes in the past \cite{Faulkner:2020hzi}. In retrospect, this is not surprising, and reflects the fact that exact entanglement wedge reconstruction is recovered (under this reinterpretation in terms of renormalization) in the large $N$ theory. Note that the link between conditional expectations and renormalization group flow has also already been mentioned in \cite{Furuya:2020tzv}.

\subsection{Formal setup}

More formally, the ideas presented above can be captured by the following definition of a code subspace renormalization scheme:\footnote{Assumptions relative to faithfulness made throughout the paper are here essentially for convenience. The von Neumann algebras considered here are also all taken to be factors for simplicity.}

\begin{defn}
Let $M$ be a von Neumann factor. A \textit{code subspace renormalization scheme} for $M$ is a datum $(\Lambda,(M_\lambda)_{\lambda\in\Lambda},(\mathcal{E}_\lambda)_{\lambda\in \Lambda},(E_{\lambda\mu})_{\lambda,\mu\in \Lambda,\;\lambda\geq \mu},\omega)$, where:
\begin{itemize}
\item $\Lambda$ is a partially ordered set,
\item The $M_\lambda$ are type $I$ subfactors of $M$,
\item The $\mathcal{E}_\lambda$ are faithful normal conditional expectations from $M$ onto $M_\lambda$,
\item The $E_{\lambda\mu}$ are faithful normal conditional expectations from $M_\lambda$ onto $M_\mu$.
\item For $\lambda\geq\mu\geq\nu$, the following compatibility relations hold:
\begin{align}
    \mathcal{E}_\mu=E_{\lambda\mu}\circ\mathcal{E}_\lambda,
\end{align}
\begin{align}
    E_{\lambda\nu}=E_{\mu\nu}\circ E_{\lambda\mu}.
\end{align}
\item $\omega$ is a faithful normal state on $M$ that is invariant under all the $\mathcal{E}_\lambda$.
\end{itemize}
\end{defn}

 The next step is to introduce a Hilbert space of states on which $M$ acts. A natural choice, if one wants to think of a situation in which there are two boundary CFT's with a black hole in the center, is to think of the GNS representation of $M$ in the state $\omega$, which carries a nontrivial commutant structure for $M$:

\begin{defn}
Let $(\Lambda,(M_\lambda)_{\lambda\in\Lambda},(\mathcal{E}_\lambda)_{\lambda\in \Lambda},(E_{\lambda\mu})_{\lambda,\mu\in \Lambda,\;\lambda\geq \mu},\omega)$ be a code subspace renormalization scheme. The \textit{unregulated code subspace} associated to $(\Lambda,(M_\lambda)_{\lambda\in\Lambda},(\mathcal{E}_\lambda)_{\lambda\in \Lambda},(E_{\lambda\mu})_{\lambda,\mu\in \Lambda,\;\lambda\geq \mu},\omega)$ is the GNS Hilbert space $\mathcal{H}$ of $M$ in the state $\omega$.
\end{defn}

Note that as an alternative to constructing the Hilbert space directly from the GNS procedure, one could also have defined the a code subspace renormalization scheme directly from the action of a von Neumann algebra on a Hilbert space that contains a cyclic separating vector whose restriction to the von Neumann algebra is invariant under the conditional expectations. By uniqueness of the GNS representation, such a construction is isomorphic to the one described above. The next step is to introduce Hilbert subspaces associated to the regulated subalgebras $M_\lambda$.\footnote{Related Hilbert subspaces were introduced in \cite{FaulknerLi} in the context of the proof of an asymptotic information-disturbance tradeoff, although compatibility with a conditional expectation structure was not assumed. For these subspaces, reconstruction assumptions involving finer than weak or strong operator topologies can be assumed, in the same spirit as what will be done in the next section of this paper.}

\begin{defn}
Let $(\Lambda,(M_\lambda)_{\lambda\in\Lambda},(\mathcal{E}_\lambda)_{\lambda\in \Lambda},(E_{\lambda\mu})_{\lambda,\mu\in \Lambda,\;\lambda\geq \mu},\omega)$ be a code subspace renormalization scheme. The \textit{regulated code subspaces} $\mathcal{H}_\lambda\subset\mathcal{H}$ associated to the von Neumann subalgebra $M_\lambda$ are the Hilbert spaces spanned by the $M_\lambda\ket{\Omega}$, where $\ket{\Omega}$ is the GNS vector associated to $\omega$ in $\mathcal{H}$.
\end{defn}

The following proposition makes it explicit why the structure just introduced is well-adapted to describe a renormalization scheme for entropy. In particular, it identifies Hilbert spaces associated to states that are invariant under the conditional expectations.

\begin{prop}
\label{prop:cexptypei}
Let $(\Lambda,(M_\lambda)_{\lambda\in\Lambda},(\mathcal{E}_\lambda)_{\lambda\in \Lambda},(E_{\lambda\mu})_{\lambda,\mu\in \Lambda,\;\lambda\geq \mu},\omega)$ be a code subspace renormalization scheme, and let $\lambda\geq\mu$. There exist decompositions of the form 
\begin{align}
    M=M_\lambda\otimes M_\lambda^c,
\end{align}
and 
\begin{align}
    M=M_\mu\otimes M_{\lambda\mu}\otimes M_\lambda^c,
\end{align}
where $M_\mu$ and $M_{\lambda\mu}$ are type $I$ factors. Moreover, the Hilbert spaces $\mathcal{H}_\lambda$ and $\mathcal{H}_\mu$ are isomorphic to the GNS Hilbert spaces of $M_\lambda$ and $M_\mu$ in the state $\omega$, and there exist Hilbert spaces $\mathcal{H}^0_\lambda$, $\mathcal{H}^0_\mu$, $\mathcal{H}^{0c}_\lambda$, $\mathcal{H}^{0c}_\mu$ such that 
\begin{align}
    \mathcal{H}=\mathcal{H}^0_\lambda\otimes \mathcal{H}_\lambda^{0c},
\end{align}
and
\begin{align}
    \mathcal{H}=\mathcal{H}^{0}_\mu\otimes\mathcal{H}^{0c}_{\mu}
\end{align}
and states $\ket{\chi_\mu}\in\mathcal{H}_{\mu}^{0c}$, $\ket{\chi_\lambda}\in\mathcal{H}_{\lambda}^{0c}$ such that 
\begin{align}
    \mathcal{H}_\lambda=\mathcal{H}^0_\lambda\otimes \ket{\chi_\lambda},
\end{align}
and
\begin{align}
    \mathcal{H}_\mu=\mathcal{H}^{0}_\mu\otimes\ket{\chi_\mu}.
\end{align}

Moreover there exists a further decomposition 
\begin{align}
\mathcal{H}_\mu^{0c}=\mathcal{H}^0_{\lambda\mu}\otimes\mathcal{H}_\lambda^{0c},
\end{align} and a state $\ket{\chi_{\lambda\mu}}\in\mathcal{H}^0_{\lambda\mu}$ under which
\begin{align}
    \ket{\chi_\mu}=\ket{\chi_{\lambda\mu}}\otimes\ket{\chi_\lambda}.
\end{align}
\end{prop}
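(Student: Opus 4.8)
The plan is to build everything from the structure theory of conditional expectations onto type $I$ subfactors, using Takesaki's theorem to control how the GNS data splits. First I would analyze a single faithful normal conditional expectation $\mathcal{E}_\lambda : M \to M_\lambda$ onto a type $I$ subfactor. Because $M_\lambda$ is type $I$, it admits a decomposition $M_\lambda = B(\mathcal{K}_\lambda) \bar\otimes \mathbb{C}$, and the relative commutant $M_\lambda' \cap M$ is a von Neumann algebra; the key claim is that $\mathcal{E}_\lambda$ forces a tensor splitting $M = M_\lambda \bar\otimes M_\lambda^c$ with $M_\lambda^c = M_\lambda' \cap M$. I would argue this by noting that the minimal central projections / matrix units of the type $I$ factor $M_\lambda$ lift to matrix units in $M$, so $M$ is generated by $M_\lambda$ and its relative commutant with the expected tensor structure; since $M$ is assumed to be a factor, $M_\lambda^c$ is itself a factor. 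Applying the same argument to $M_\mu \subset M_\lambda$ (using $E_{\lambda\mu}$) gives $M_\lambda = M_\mu \bar\otimes M_{\lambda\mu}$ with $M_{\lambda\mu}$ a type $I$ factor, and substituting yields $M = M_\mu \bar\otimes M_{\lambda\mu} \bar\otimes M_\lambda^c$ as claimed. The compatibility relation $\mathcal{E}_\mu = E_{\lambda\mu}\circ\mathcal{E}_\lambda$ guarantees these decompositions are mutually consistent (the $M_\mu$ appearing both ways is the same subfactor).

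Next I would transport this to the GNS level via Takesaki's theorem. Since $\omega$ is invariant under $\mathcal{E}_\lambda$, the modular automorphism group $\sigma^\omega_t$ restricts to $M_\lambda$, so by Takesaki's theorem $M_\lambda$ is the image of a $\sigma^\omega$-preserving conditional expectation, and the GNS space $\mathcal{H}$ carries a compatible splitting. Concretely: because $M = M_\lambda \bar\otimes M_\lambda^c$ and $\omega$ is a faithful normal state invariant under $\mathcal{E}_\lambda$, the restriction $\omega|_{M_\lambda^c}$ is a faithful normal state on the factor $M_\lambda^c$; its GNS representation has a cyclic separating vector $\ket{\chi_\lambda}$. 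The GNS space of $M$ then factors as $\mathcal{H} = \mathcal{H}^0_\lambda \otimes \mathcal{H}^{0c}_\lambda$, where $\mathcal{H}^0_\lambda$ is the GNS space of $M_\lambda$ in $\omega|_{M_\lambda}$ and $\mathcal{H}^{0c}_\lambda$ that of $M_\lambda^c$ in $\omega|_{M_\lambda^c}$, with the GNS vector $\ket{\Omega} = \ket{\Omega_\lambda} \otimes \ket{\chi_\lambda}$. I would then identify $\mathcal{H}_\lambda = \overline{M_\lambda\ket{\Omega}}$: since $M_\lambda$ acts only on the first tensor factor and $\ket{\Omega_\lambda}$ is cyclic for $M_\lambda$ in $\mathcal{H}^0_\lambda$, we get $\mathcal{H}_\lambda = \mathcal{H}^0_\lambda \otimes \mathbb{C}\ket{\chi_\lambda}$, which is exactly the stated form (and isomorphic to the GNS space of $M_\lambda$ in $\omega$). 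The same applied to $\mu$ gives $\mathcal{H} = \mathcal{H}^0_\mu \otimes \mathcal{H}^{0c}_\mu$, $\mathcal{H}_\mu = \mathcal{H}^0_\mu \otimes \ket{\chi_\mu}$.

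Finally I would establish the nested refinement. From $M = M_\mu \bar\otimes M_{\lambda\mu} \bar\otimes M_\lambda^c$ with $M_\lambda^c = (M_{\lambda\mu} \bar\otimes M_\lambda^c)$ sitting inside $M_\mu^c = M_\mu' \cap M$, I would decompose the GNS space of $M_\mu^c$ in $\omega|_{M_\mu^c}$. Because $\omega$ restricted to $M_\mu^c = M_{\lambda\mu}\bar\otimes M_\lambda^c$ is again invariant under the relevant conditional expectation (this follows from $\mathcal{E}_\mu = E_{\lambda\mu}\circ\mathcal{E}_\lambda$ and restricting), its GNS space splits as $\mathcal{H}^{0c}_\mu = \mathcal{H}^0_{\lambda\mu} \otimes \mathcal{H}^{0c}_\lambda$ with the cyclic separating vector factoring as $\ket{\chi_\mu} = \ket{\chi_{\lambda\mu}} \otimes \ket{\chi_\lambda}$, $\ket{\chi_{\lambda\mu}} \in \mathcal{H}^0_{\lambda\mu}$. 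Consistency of $\mathcal{H}^0_\lambda$ across the two decompositions (it is the GNS space of $M_\lambda$ in both) follows because $M_\lambda = M_\mu \bar\otimes M_{\lambda\mu}$ and $\mathcal{H}^0_\mu \otimes \mathcal{H}^0_{\lambda\mu}$ is the GNS space of $M_\lambda$ in $\omega|_{M_\lambda}$, which is $\mathcal{H}^0_\lambda$. I expect the main obstacle to be the careful verification that all the conditional-expectation invariance hypotheses descend to the relative commutants so that Takesaki's theorem applies at each stage — in particular checking that $\omega$ restricted to $M_{\lambda\mu}\bar\otimes M_\lambda^c$ is invariant under a conditional expectation onto the correct subfactor, which is where the compatibility relations between the $\mathcal{E}$'s and $E$'s do the real work; the tensor-factorization of type $I$ subfactors and the induced GNS splitting are standard once invariance is in hand.
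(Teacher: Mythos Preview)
Your proposal is correct and follows essentially the same route as the paper: the tensor factorization $M = M_\lambda \bar\otimes M_\lambda^c$ from the type~$I$ subfactor structure (the paper cites Str\u{a}til\u{a}, paragraph~9.15, rather than arguing via matrix units), and then the product form of the invariant state $\omega = \omega_\mu \otimes \omega_{\lambda\mu} \otimes \omega_\lambda^c$ (the paper cites Equation~4.10 of \cite{Faulkner:2020hzi} for this, which packages what you extract from Takesaki), from which the GNS vector factorizes and the rest follows. Your anticipated obstacle about descending invariance to relative commutants is not really needed: once $\omega$ is invariant under both $\mathcal{E}_\lambda$ and $E_{\lambda\mu}$, the triple product form of $\omega$ is immediate from the cited structural result, and the Hilbert space statements are then automatic from GNS functoriality.
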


Further, note that under the decomposition described by the above proposition, the conditional expectations take a very simple form (where the states have been identified with their expectation value functionals):
\begin{align}
\mathcal{E}_\lambda=Id_{\mathcal{B}(\mathcal{H}_\lambda)}\otimes\chi_\lambda, \quad\quad E_{\lambda\mu}=Id_{\mathcal{B}(\mathcal{H}_\mu)}\otimes\chi_{\lambda\mu}.
\end{align}

\begin{proof}
The first two factorizations follow from the fact that the $M_\lambda$ are type $I$ factors, for a proof see paragraph 9.15 of \cite{Stratila_2020}.
With these factorizations in hand, Equation 4.10 of \cite{Faulkner:2020hzi} guarantees that $\omega$, which is an invariant state under both $\mathcal{E}_\lambda$ and $E_{\lambda\mu}$ must have the form $\omega_\mu\otimes\omega_{\lambda\mu}\otimes\omega^c_\lambda.$ Hence its GNS vector has the form 
\begin{align}
\ket{\Omega}=\ket{\chi_\mu}\otimes\ket{\chi_{\lambda\mu}}\otimes\ket{\chi^c_{\lambda}}.
\end{align}
The result straightforwardly follows.
\end{proof}

The above factorizations make it possible to calculate von Neumann entropy very explicitly for invariant states under the conditional expectations.

In the most explicit case in which it is possible to define a good additive notion of von Neumann entropy on $M$, for any state of the form $\ket{\psi_\lambda}\otimes\ket{\chi^c_\lambda}$, we have 
\begin{align}
    S(\ket{\psi_\lambda}\otimes\ket{\chi_\lambda}, M)=S(\ket{\chi_\lambda}, M_\lambda^c)+S(\ket{\psi_\lambda}, M_\lambda).
\end{align}
The entanglement entropy therefore splits into two pieces: a UV piece, $S(\ket{\chi_\lambda}, M_\lambda^c)$, which is generically divergent, and a (potentially) finite piece corresponding to the regulated type $I$ algebra $M_\lambda$. 

Moreover, under the renormalization group flow, we have the further decomposition
\begin{align}
\label{eq:renormalize}
    S(\ket{\psi_\mu}\otimes\ket{\chi_{\lambda\mu}}\otimes\ket{\chi_\lambda})=S(\ket{\chi_\lambda}, M_\lambda^c)+S(\ket{\chi_{\lambda\mu}}, M_{\lambda\mu})+S(\ket{\psi_\mu}, M_\mu).
\end{align}
The interpretation of the new term in the middle, $S(\ket{\psi_{\lambda\mu}}, M_{\lambda\mu})$, is that it integrates out some of the entropy associated to observables that are in $M_\lambda$ but not in $M_\mu$, and throws it into the UV piece of the entanglement of the state. This extra term will be reinterpreted as a renormalization term for Newton's constant in the next section. Note, however, that importantly, even in the case in which it is no longer possible to define the divergent term $S(\ket{\chi_\lambda}, M_\lambda^c)$ (or to give it a state counting interpretation), it is still possible to talk about the entropy of the type $I$ subalgebras involved, so that code subspace renormalization yields
\begin{align}
\label{eq:renormalizelite}
    S(\ket{\psi_\lambda}, M_\lambda)=S(\ket{\chi_{\lambda\mu}}, M_{\lambda\mu})+S(\ket{\psi_\mu}, M_\mu).
\end{align}
Crucially, it is only this latter equality that will be necessary to prove an entropy formula.

The other nice feature of the structure of code subspace renormalization scheme is that it respects the commutant structures. More precisely, by Takesaki's theorem, the modular structures of $M$, $M_\lambda$ and $M_\mu$ are compatible. This implies:

\begin{prop}
Let $(\Lambda,(M_\lambda)_{\lambda\in\Lambda},(\mathcal{E}_\lambda)_{\lambda\in \Lambda},(E_{\lambda\mu})_{\lambda,\mu\in \Lambda,\;\lambda\geq \mu},\omega)$ be a code subspace renormalization scheme. For $\lambda,\mu\in\Lambda$, in $\mathcal{H}_\lambda$ and $\mathcal{H}_\mu$,
\begin{align}
    M^\prime_\lambda=JM_\lambda J,\;\;\;\;M^\prime_\mu=JM_\mu J,
\end{align}
where $J$ is the modular conjugation of $\ket{\Omega}$ with respect to $M$.
\end{prop}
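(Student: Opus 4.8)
The plan is to recognize that the two claimed identities are nothing but Tomita--Takesaki duality applied to the reduced systems $(M_\lambda,\mathcal{H}_\lambda,\ket{\Omega})$ and $(M_\mu,\mathcal{H}_\mu,\ket{\Omega})$, once one knows --- which is exactly the content of Takesaki's theorem --- that the modular data of these reduced systems are the restrictions of the modular data of $(M,\mathcal{H},\ket{\Omega})$. I will only treat $\lambda$; the case of $\mu$ is word for word identical.

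First I would check that $\ket{\Omega}$ is cyclic and separating for $M_\lambda$ acting on $\mathcal{H}_\lambda$. Cyclicity is immediate, since by definition $\mathcal{H}_\lambda$ is the closure of $M_\lambda\ket{\Omega}$. For the separating property, if $x\in M_\lambda\subset M$ satisfies $x\ket{\Omega}=0$, then $x=0$ because $\ket{\Omega}$ is separating for $M$ (it is the GNS vector of the faithful normal state $\omega$). Hence $(M_\lambda,\mathcal{H}_\lambda,\ket{\Omega})$ is in standard form and carries its own modular conjugation $J_\lambda\in\mathcal{B}(\mathcal{H}_\lambda)$, with $M_\lambda'=J_\lambda M_\lambda J_\lambda$ in $\mathcal{B}(\mathcal{H}_\lambda)$ by Tomita--Takesaki.

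The heart of the argument is to identify $J_\lambda$ with the restriction of $J$ to $\mathcal{H}_\lambda$. Since $\mathcal{E}_\lambda\colon M\to M_\lambda$ is a faithful normal conditional expectation with $\omega\circ\mathcal{E}_\lambda=\omega$, Takesaki's theorem \cite{TAKESAKI1972306} applies: $M_\lambda$ is globally invariant under the modular automorphism group of $\omega$, the Jones projection $e_\lambda$ onto $\mathcal{H}_\lambda$ commutes with $\Delta_\omega^{it}$ and with $J$ (so in particular $J$ preserves $\mathcal{H}_\lambda$), and the restrictions satisfy $\Delta_\omega^{it}|_{\mathcal{H}_\lambda}=\Delta_{\omega_\lambda}^{it}$ and $J|_{\mathcal{H}_\lambda}=J_\lambda$, where $\omega_\lambda=\omega|_{M_\lambda}$. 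Substituting $J_\lambda=J|_{\mathcal{H}_\lambda}$ into the Tomita--Takesaki identity of the previous step gives $M_\lambda'=J M_\lambda J$ as operators on $\mathcal{H}_\lambda$, which is the assertion. (Equivalently, one may read this off the explicit tensor factorization of Proposition \ref{prop:cexptypei}: writing $M=M_\lambda\otimes M_\lambda^c$ and $\mathcal{H}=\mathcal{H}^0_\lambda\otimes\mathcal{H}^{0c}_\lambda$ with $\ket{\Omega}$ a product vector and $M_\lambda$ acting on $\mathcal{H}^0_\lambda$, the modular conjugation of $\ket{\Omega}$ factorizes accordingly and restricts on $\mathcal{H}_\lambda=\mathcal{H}^0_\lambda\otimes\ket{\chi_\lambda}$ to the modular conjugation of $\omega|_{M_\lambda}$, since $J^c_\lambda\ket{\chi_\lambda}=\ket{\chi_\lambda}$.)

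The only genuine subtlety --- the step I expect to be the main obstacle to state cleanly --- is the interpretation of the symbol $J$ in the statement: a priori $J$ is defined on all of $\mathcal{H}$, not on $\mathcal{H}_\lambda$, so one must first know that $J$ preserves $\mathcal{H}_\lambda$ and then that its restriction is the intrinsic modular conjugation of $(M_\lambda,\mathcal{H}_\lambda,\ket{\Omega})$. Both of these are precisely the compatibility statements guaranteed by Takesaki's theorem via the $\mathcal{E}_\lambda$-invariance of $\omega$, so once that theorem is invoked there is nothing further to do --- no estimates or limits enter. Applying the same remarks to $\mathcal{E}_\mu$ and $M_\mu$ yields $M_\mu'=JM_\mu J$ on $\mathcal{H}_\mu$, completing the proof.
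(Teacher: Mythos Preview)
Your proof is correct and follows essentially the same approach as the paper: both invoke Takesaki's theorem, using the $\mathcal{E}_\lambda$-invariance of $\omega$ to conclude that the modular data of $(M,\mathcal{H},\ket{\Omega})$ restrict compatibly to $(M_\lambda,\mathcal{H}_\lambda,\ket{\Omega})$. The paper's proof is a one-liner citing Takesaki's theorem; you have simply spelled out the mechanism (cyclicity/separating, $J$ preserving $\mathcal{H}_\lambda$, $J|_{\mathcal{H}_\lambda}=J_\lambda$) and added the alternative route via the tensor factorization of Proposition~\ref{prop:cexptypei}.
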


\begin{proof}
This is a direct consequence of Takesaki's theorem, given that states in $\mathcal{H}_\lambda$ and $\mathcal{H}_\mu$ are invariant under the corresponding conditional expectations.
\end{proof}

This fact guarantees the compatibility of the commutant structures along the renormalization group flow, and a nice nesting of all subspaces and subalgebras at hand. In particular, then, there also exist faithful normal conditional expectations $\mathcal{E}^\prime_\lambda$, $\mathcal{E}^\prime_\mu$ and $E^\prime_{\lambda\mu}$, defined on $M^\prime$ and $M^\prime_\lambda$ respectively, by 
\begin{align}
    \mathcal{E}^\prime_\lambda(X):=J\mathcal{E}_\lambda(JXJ)J,
\end{align}
\begin{align}
    \mathcal{E}^\prime_\mu(X):=J\mathcal{E}_\mu(JXJ)J,
\end{align}
\begin{align}
    E^\prime_{\lambda\mu}(X):=JE_{\lambda\mu}(JXJ)J.
\end{align}

Note that the compatibility condition 
\begin{align}
    \mathcal{E}^\prime_\mu=E^\prime_{\lambda\mu}\circ \mathcal{E}^\prime_\lambda
\end{align}
is satisfied.

Figure \ref{fig:cd} summarizes the structure of code subspace renormalization scheme, and how the compatibility between conditional expectations and commutant structures is realized, thanks to a commutative diagram. 

\begin{figure}
\centering
\[\begin{tikzcd}
	{M} && {M^\prime} && {\mathcal{H}} \\
	\\
	{M_\lambda} && {M_\lambda^\prime} && {\mathcal{H}_\lambda} \\
	\\
	{M_\mu} && {M_\mu^\prime} && {\mathcal{H}_\mu}
	\arrow["\prime", from=1-1, to=1-3]
	\arrow["{\mathcal{E}_\lambda}"', from=1-1, to=3-1]
	\arrow["{\mathcal{E}^\prime_\lambda}", from=1-3, to=3-3]
	\arrow["\prime", from=3-1, to=3-3]
	\arrow["{E_{\lambda\mu}}"', from=3-1, to=5-1]
	\arrow["{E^\prime_{\lambda\mu}}", from=3-3, to=5-3]
	\arrow["\prime", from=5-1, to=5-3]
	\arrow[from=1-5, to=3-5]
	\arrow[ from=3-5, to=5-5]
	\arrow["{\mathcal{E}_\mu}"', bend right=40, from=1-1, to=5-1]
	\arrow["{\mathcal{E}^\prime_\mu}", bend left=40, from=1-3, to=5-3]
\end{tikzcd}\]
\caption{A commutative diagram summarizing the structure of code subspace renormalization. Here the full bulk von Neumann algebras $M$ and $M^\prime$, which are commutants of each other, are mapped to the subalgebras $M_\lambda$ and $M_\mu$ and their commutants, corresponding to different cutoff scales, through the conditional expectations $\mathcal{E}_\lambda$, $\mathcal{E}_\mu$ and $\mathcal{E}^\prime_\lambda$, $\mathcal{E}^\prime_\mu$. The prime on the horizontal arrows denotes the commutant structure implemented by modular conjugation. Given that the states in $\mathcal{H}_\lambda$ and $\mathcal{H}_\mu$ are invariant under the conditional expectations, Takesaki's theorem guarantees that the commutant structure is respected, and that the diagram commutes.}
\label{fig:cd}
\end{figure}
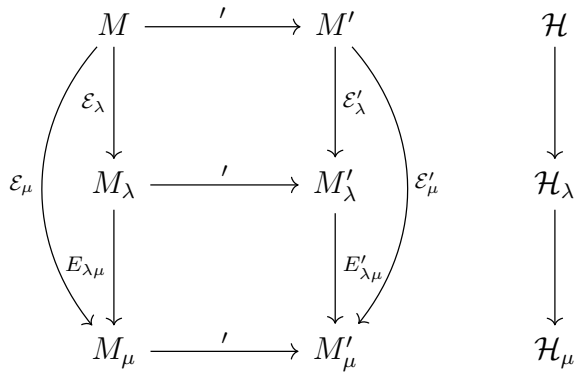

It follows from the previous analysis that the structure of code subspace renormalization scheme proposed here, and based upon nested type $I$ factors and Hilbert spaces related to each other by conditional expectations, is a good choice in the sense that it allows to completely decouple the contribution to the bulk entropy of different subalgebras of the large $N$ theory, and most importantly, to preserve complementarity. However, it does not provide a constructive way of defining these algebras - the most naive attempt of considering low energy products of single trace operators fails because such spaces are not closed under multiplication. 

Instead, one should think of the type $I$ factors introduced here as something closer to the type $I$ factors arising for subregions in theories that satisfy the split property. It has been argued in the past (see for example \cite{https://doi.org/10.48550/arxiv.1905.00577}) that such algebras can be thought of as UV-regulators for a quantum field theory. It is quite tempting to observe that restricting observables to a type $I$ factor can be thought of as imposing a ``brick wall" cutoff in the bulk QFT in the spirit of \cite{Susskind_1994}, and it would be very interesting to understand this better.

\section{A proof of the Susskind--Uglum conjecture}
\label{sec:SusskindUglum}
Now that a renormalization scheme for the bulk effective field theory has been defined, one can ask how the UV-regulated algebras map into the boundary theory. In this section, I show that for suitable values of $N$ and of the UV cutoff, the reconstruction map is good enough that the code satisfies a Ryu--Takayanagi formula. The value of the UV cutoff corresponds to a renormalization scale, and its choice is entirely arbitrary as long as it remains within a suitable range. I show that the Ryu--Takayanagi formula is invariant under the renormalization group flow. This provides an explicit proof of the Susskind--Uglum conjecture.

\subsection{The bulk to boundary map}

\begin{figure}
\centering
\includegraphics[height=5cm,width=10cm]{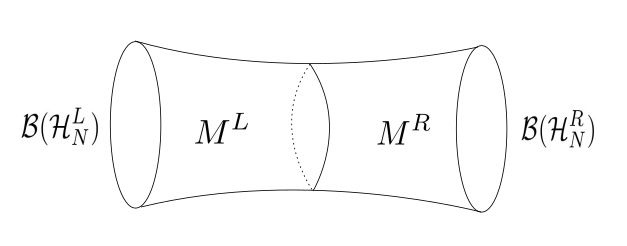}
\caption{The code in the case of two entangled CFT's on a compact space. The large $N$ algebras $M^L$ and $M^R$ need to be regulated in order for the map to the finite $N$ algebras $\mathcal{B}(\mathcal{H}_N^L)$ and $\mathcal{B}(\mathcal{H}_N^R)$ to allow the derivation of an entropy formula.}
\label{fig:code}
\end{figure}

The next step in this work is to map the effective theory at large $N$ in the bulk to a finite $N$ theory on the boundary. In order to do this, one needs to introduce one more object: the bulk-to-boundary map. This motivates the following definition of a renormalizable large $N$ quantum error-correcting code, see Figure \ref{fig:code}.\footnote{Once again note that here I will be mainly focusing on the case of two entangled CFT's, rather than the case of subregions of one CFT. This is because the latter case would require an extra regulation procedure due to the infiniteness of the Ryu--Takayanagi surface, but this regulation procedure would not teach us anything meaningful about the physics described here.}
\begin{defn}
A \textit{renormalizable large-}$N$ \textit{holographic quantum error correcting code} is defined by the data of a sequence of Hilbert spaces $(\mathcal{H}_N)_{N\in\mathbb{N}}$, and a sequence of contractive maps $V_N:\mathcal{H}\longrightarrow\mathcal{H}_N\otimes\mathcal{H}_N$, where $\mathcal{H}$ corresponds to the unregulated Hilbert space of a code subspace renormalization scheme $(\Lambda,(M_\lambda)_{\lambda\in\Lambda},(\mathcal{E}_\lambda)_{\lambda\in \Lambda}, (E_{\lambda\mu})_{\lambda,\mu\in \Lambda,\;\lambda\geq \mu}, \omega)$. 
\end{defn}

In order to explicitly differentiate between the two sides of the code, the algebras $M_\lambda$ will often be denoted $M_\lambda^L$, and their commutants $M_\lambda^R$, and similarly for the Hilbert spaces $\mathcal{H}_N^L$ and $\mathcal{H}_N^R$ on the boundary.

It was argued above that for each finite value of $N$, the holographic code drastically fails to map some of the bulk states to the boundary with good precision. Hence, one needs to renormalize the code subspace. The idea of this section will be to use the framework of code subspace renormalization put forward above to define UV-regulated subalgebras for which strong enough reconstruction properties can be imposed. A Ryu--Takayanagi formula will then be proven. Moreover, under the renormalization group flow, I will explicitly show that the corrections to the area term and the bulk entropy term of the formula exactly compensate each other.

A technical remark is that in order to be able to control the von Neumann entropy of states on the boundary, it will be necessary to impose strong reconstruction assumptions on the states in regularized code subspaces (typically, a nonperturbatively small error in the $1/N$ expansion, or at least, small enough that the polynomially divergent factor in Fannes' inequality doesn't spoil the conclusions). In order to be able to impose such an assumption, one needs to include backreaction effects in the code subspace that go beyond the strict large-$N$ limit of Leutheusser--Liu, and may introduce some $N$-dependence not only at the level of the boundary theory, but also of the code subspace and its renormalization scheme. Different ideas exist to construct code subalgebras allowing for perturbation theory in $1/N$ \cite{Witten:2021unn,FaulknerLi}. While it is beyond the scope of this paper to attempt such a construction, I emphasize that the results introduced here are still valid if the code subspace and its renormalization scheme depend perturbatively on $N$ (and it would be interesting to find a systematic way to interpolate between schemes with different values of $N$). For convenience, the rest of this section (and of this paper) will not make it explicit in its notations that the code subspace renormalization scheme and its Hilbert spaces themselves might depend on $N$.

\subsection{A Ryu--Takayanagi formula}

This section will cover the simplest case in which the chosen bulk algebra $M_\lambda$, as well as the boundary algebra, are taken to be finite-dimensional. In this case, the Ryu--Takayanagi formula can be derived by adapting the proof of a result of \cite{Akers_2022}, which introduces a framework in which it is possible to talk about area terms for approximate and non-isometric codes. It generalizes to the approximate case the notion of area term for quantum codes introduced in \cite{Harlow:2016vwg}.

Let us first briefly recall the setup of \cite{Akers_2022}, and especially, how one defines a good notion of area term from the structure of the code in this context. The idea is to consider a map 
\begin{align}
    V:\mathcal{K}_L\otimes\mathcal{K}_R\longrightarrow\mathcal{H}_L\otimes\mathcal{H}_R,
\end{align} 
where $\mathcal{K}_{L,R}$ are left and right bulk Hilbert spaces (in the context considered here where the bulk regions are associated to different sides of the black hole) and are finite-dimensional, and introduce a reference Hilbert space 
\begin{align}
    \mathcal{K}^{ref}:=\mathcal{K}^{ref}_L\otimes\mathcal{K}^{ref}_R,
\end{align} 
where $\mathcal{K}^{ref}_L$ and $\mathcal{K}^{ref}_R$ have the same dimensions as $\mathcal{K}_L$ and $\mathcal{K}_R$ and are identified with dual Hilbert spaces.
One can then introduce the canonical maximally entangled state 
\begin{align}
    \ket{MAX}\in\mathcal{K}_L\otimes\mathcal{K}_R\otimes\mathcal{K}^{ref},
\end{align} 
which maximally entangles $\mathcal{K}_{L,R}$ and $\mathcal{K}^{ref}_{L,R}$ together in a basis-independent manner.
The Choi--Jamiolkowski state is then 
\begin{align}
    \ket{CJ}:=(V\otimes Id)\ket{MAX}\in\mathcal{H}_L\otimes\mathcal{H}_R\otimes\mathcal{K}^{ref}.
\end{align} 
The definition of area term for a subregion proposed in \cite{Akers_2022} (for example, associated to $\mathcal{H}_{L}$) is then given by:

\begin{defn}
In the code defined by the map $V$, the area term of the region $L$ is defined by
\begin{align}
    A(\mathcal{K}_{L}):=S(\ket{CJ},\mathcal{B}(\mathcal{H}_L)\otimes\mathcal{B}(\mathcal{K}^{ref}_L)).
\end{align} 
\end{defn}

Note that this definition is independent of the choice of state in $\mathcal{K}_L\otimes\mathcal{K}_R$.\footnote{In the context of this paper, $\mathcal{K}_L$ and $\mathcal{K}_R$ are the regulated Hilbert spaces, and this independence reflects the fact that the regulated algebras are assumed to not have a center.} However, it \textit{does} depend on the choice of code subspace $\mathcal{K}_L\otimes\mathcal{K}_R$. This point was not made explicit in \cite{Akers_2022} as in their context, only one code subspace is considered. Here however, in the novel framework of code subspace renormalization, the choice of code subspace and code subalgebras becomes highly nonunique, and it will turn out to be very important that the value of the area term does depend on the choice of code subspace, even though it does not depend on the choice of a particular state inside it.

The proof of the Ryu--Takayanagi formula in \cite{Akers_2022} heavily relies on the Fannes inequality, which turns out to become vacuous in the case of an infinite-dimensional boundary Hilbert space. In conformal field theory, even at finite $N$, the Hilbert space is infinite-dimensional, and the boundary algebra is a type $I_\infty$ factor. However it is usual to assume, as a first approximation, that the logarithm of the dimension of $\mathcal{H}_N$ grows polynomially in $N$ in order to obtain a proof of the Ryu--Takayanagi formula. This is usually justified due to the fact that the boundary entropy grows polynomially in $N$. In Section \ref{sec:generalize}, I will introduce an alternative to this assumption, thanks to the use of an alternative to Fannes' inequality due to Winter \cite{Winter_2016}. However I will simply make a finite-dimensional assumption in this section, as the proof will be less technical and easier to follow for the reader who is content with such a simplification.

Before stating and proving the theorem, let us introduce the statement of Fannes' inequality for convenience (see for example \cite{Akers_2022}). 

\begin{prop}
Let $\rho$ and $\sigma$ be subnormalized density matrices on a Hilbert space $\mathcal{H}$ of finite dimension $d$. Suppose that for $0<\varepsilon<e^{-1}$, $\|\rho-\sigma\|_1\leq\varepsilon.$ Then, 
\begin{align}
|S(\rho)-S(\sigma)|\leq\varepsilon\,\mathrm{log}\left(\frac{d}{\varepsilon}\right).
\end{align}
\end{prop}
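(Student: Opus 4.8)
The plan is to reduce the inequality to a one‑dimensional estimate on the function $\eta(x):=-x\log x$ and then exploit its concavity, following the classical argument for Fannes' inequality; it survives the passage to subnormalized operators essentially unchanged. First I would diagonalize. Writing $p=(p_1\geq p_2\geq\cdots)$ and $q=(q_1\geq q_2\geq\cdots)$ for the decreasingly ordered eigenvalue sequences of $\rho$ and $\sigma$ (each lying in $[0,1]$ with sum $\leq 1$), one has $S(\rho)=\sum_i\eta(p_i)$ and $S(\sigma)=\sum_i\eta(q_i)$, so everything becomes a statement about the sub‑probability vectors $p$ and $q$. The only place the trace‑norm hypothesis enters is through the standard majorization fact (a consequence of Lidskii's theorem) that the vector of ordered‑eigenvalue differences $(p_i-q_i)_i$ is majorized by the eigenvalue vector of $\rho-\sigma$; combined with Schur convexity of $x\mapsto|x|$ this gives $\sum_i|p_i-q_i|\leq\|\rho-\sigma\|_1\leq\varepsilon$. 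This step is purely a matrix‑analytic statement about Hermitian operators and does not use normalization.

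Next I would invoke the modulus‑of‑continuity estimate for $\eta$ on $[0,1]$: for $x,y\in[0,1]$ with $|x-y|\leq e^{-1}$ one has $|\eta(x)-\eta(y)|\leq\eta(|x-y|)$. This follows from the fact that $\eta$ is concave (since $\eta''<0$) together with $\eta'\geq 0$ on $(0,e^{-1}]$, by comparing $\eta(y)-\eta(x)=\int_0^{|x-y|}\eta'(s+x)\,ds$ with $\int_0^{|x-y|}\eta'(s)\,ds=\eta(|x-y|)$ and treating the remaining case by monotonicity of $\eta$ on $[e^{-1},1]$. Since each $|p_i-q_i|\leq\varepsilon<e^{-1}$, this yields
\[
|S(\rho)-S(\sigma)|\;\leq\;\sum_i|\eta(p_i)-\eta(q_i)|\;\leq\;\sum_i\eta\bigl(|p_i-q_i|\bigr).
\]

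The final step is a concentration bound on $\sum_i\eta(\delta_i)$ with $\delta_i:=|p_i-q_i|$ and $T:=\sum_i\delta_i\leq\varepsilon$. Factoring out $T$ and recognizing a Shannon entropy,
\[
\sum_i\eta(\delta_i)\;=\;\eta(T)+T\,H(\delta/T)\;\leq\;\eta(T)+T\log d\;=\;T\log(d/T),
\]
where $H(\delta/T)\leq\log d$ because $\delta/T$ is a probability vector with $d$ components. A short computation shows $g(T):=T\log(d/T)$ has nonnegative derivative on $(0,e^{-1}]$ (as $d/T\geq d\,e\geq e$ there), hence $g$ is nondecreasing on that interval, so $T\log(d/T)\leq\varepsilon\log(d/\varepsilon)$, which is the claimed bound.

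I do not expect a genuine obstacle: this is the textbook proof and it goes through verbatim. If one must single out the delicate point, it is checking that both the eigenvalue‑difference majorization and the modulus‑of‑continuity lemma for $\eta$ remain valid for operators of trace strictly below one — they do, since the former concerns arbitrary Hermitian operators and the latter only uses that the eigenvalues lie in $[0,1]$ with spacing at most $e^{-1}$ — so that no stray normalization constant enters and the bound lands exactly on $\varepsilon\log(d/\varepsilon)$ rather than something looser.
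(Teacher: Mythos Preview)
Your argument is correct and is precisely the classical proof of Fannes' inequality; the three ingredients (Lidskii/Mirsky for the ordered--eigenvalue $\ell^1$ bound, the modulus--of--continuity lemma for $\eta(x)=-x\log x$ on $[0,1]$ when $|x-y|\le e^{-1}$, and the concavity rewriting $\sum_i\eta(\delta_i)=\eta(T)+T\,H(\delta/T)\le T\log(d/T)$ followed by monotonicity of $T\log(d/T)$ on $(0,e^{-1}]$) are all stated accurately and assembled in the right order. Your remark that nothing in the proof requires $\operatorname{Tr}\rho=\operatorname{Tr}\sigma=1$ is also correct: the eigenvalue perturbation bound holds for arbitrary Hermitian operators, and the pointwise estimate on $\eta$ only needs the eigenvalues to sit in $[0,1]$.

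There is nothing to compare against, however: the paper does not prove this proposition. It is introduced with ``let us introduce the statement of Fannes' inequality for convenience (see for example \cite{Akers_2022})'' and then simply quoted as a known input to the subsequent theorem. So your proposal supplies a proof where the paper offers none, and what you have written is the standard one.
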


The theorem then reads:

\begin{thm}
\label{thm:main}
Let $(\Lambda,(M_\lambda)_{\lambda\in\Lambda},(\mathcal{E}_\lambda)_{\lambda\in \Lambda},(E_{\lambda\mu})_{\lambda,\mu\in \Lambda,\;\lambda\geq \mu},\omega, V_N)$ be a large $N$ quantum error-correcting code. Suppose that for some choice of $\lambda$, $M_\lambda$ is finite-dimensional of dimension $d_\lambda^2$ constant in $N$. Suppose that there exists a polynomial $P$ in $N$ such that the dimension $d_N$ of $\mathcal{H}_N$ satisfies
\begin{align}
\mathrm{log}\,d_N\leq P(N).
\end{align}
Let $\ket{\Psi}\in\mathcal{H}_\lambda$ (normalized). Suppose that for all unitary operators $U^{L}_\lambda, \hat{U}^{L}_\lambda$ and $U^{R}_\lambda, \hat{U}^{R}_\lambda$ in $M^{L}_\lambda$ and $M^{R}_\lambda$, there exist unitary operators $\tilde{U}^{L}_\lambda$ and $\tilde{U}^{R}_\lambda$ (chosen in a measurable way) in $\mathcal{B}(\mathcal{H}_N^L)$ and $\mathcal{B}(\mathcal{H}_N^R)$ such that\footnote{The $\hat{U}$'s are introduced essentially to match with the notations of \cite{Akers_2022}. Also note that $\delta_N$ only needs to be smaller than all polynomials appearing in the proof.}
\begin{align}
\label{eq:reconstructionassumption}
    \|V_N U^{R}_\lambda U^{L}_\lambda\ket{\Psi}-\tilde{U}^{R}_\lambda \tilde{U}^{L}_\lambda V_N \hat{U}^{R}_\lambda \hat{U}^{L}_\lambda\ket{\Psi}\|\leq \delta_N,
\end{align}
where $\delta_N$ decays faster than any polynomial in $N$.
Then,
\begin{align}
   |S(\ket{\Psi},M^L_\lambda)+A(\mathcal{H}_{L,\lambda})-S(V_N\hat{U}^{R}_\lambda \hat{U}^{L}_\lambda\ket{\Psi},\mathcal{B}(\mathcal{H}^L_N))|\underset{N\rightarrow\infty}{\longrightarrow} 0.
\end{align}
\end{thm}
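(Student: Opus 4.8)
The plan is to adapt the finite-dimensional argument of \cite{Akers_2022} to the present setup, using Proposition \ref{prop:cexptypei} to split the bulk state along the regulated factorization, and Fannes' inequality to transfer the bulk computation to the boundary. First I would invoke Proposition \ref{prop:cexptypei} for the chosen $\lambda$: since $M_\lambda$ is a type $I$ factor, $M=M_\lambda\otimes M_\lambda^c$ and the GNS Hilbert space factorizes as $\mathcal{H}=\mathcal{H}^0_\lambda\otimes\mathcal{H}^{0c}_\lambda$, with $\mathcal{H}_\lambda=\mathcal{H}^0_\lambda\otimes\ket{\chi_\lambda}$. So a normalized $\ket{\Psi}\in\mathcal{H}_\lambda$ has the product form $\ket{\psi_\lambda}\otimes\ket{\chi_\lambda}$, and by the additivity relation recorded just after the proposition, $S(\ket{\Psi},M_\lambda^L)$ is exactly the entropy of $\ket{\psi_\lambda}$ on the type $I$ factor $M_\lambda^L$, which is finite because $\dim M_\lambda = d_\lambda^2 < \infty$. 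This is the ``bulk entropy'' term.

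Next I would set up the Choi--Jamiolkowski machinery of \cite{Akers_2022} with $\mathcal{K}_L\otimes\mathcal{K}_R$ taken to be the regulated code subspace carrying $M^L_\lambda, M^R_\lambda$ (so $\mathcal{K}_{L,R}$ are finite-dimensional of dimension $d_\lambda$), build $\ket{MAX}$ and $\ket{CJ}=(V_N\otimes Id)\ket{MAX}$, and recall $A(\mathcal{H}_{L,\lambda})=S(\ket{CJ},\mathcal{B}(\mathcal{H}^L_N)\otimes\mathcal{B}(\mathcal{K}^{ref}_L))$. The core of the argument is then the standard identity, valid exactly when $V_N$ is an isometry acting covariantly, that $S(V_N\hat{U}^R_\lambda\hat{U}^L_\lambda\ket{\Psi},\mathcal{B}(\mathcal{H}^L_N)) = S(\ket{\psi_\lambda},M^L_\lambda)+A(\mathcal{H}_{L,\lambda})$; the left side, for an exact covariant code, depends only on the state on $\mathcal{K}_L$ (through its entropy) plus the universal $\ket{CJ}$ piece. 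I would reproduce this exact-code computation first, then control the errors: the reconstruction assumption \eqref{eq:reconstructionassumption} says that pushing $M^{L,R}_\lambda$-unitaries through $V_N$ only costs $\delta_N$ in norm, which by a triangle-inequality argument bounds the trace-norm distance between the true reduced density matrix $\rho_N := \mathrm{Tr}_{\mathcal{H}^R_N}(V_N\hat{U}^R_\lambda\hat{U}^L_\lambda\ket{\Psi}\bra{\Psi}\hat{U}^{L\dagger}_\lambda\hat{U}^{R\dagger}_\lambda V_N^\dagger)$ and the idealized density matrix $\sigma_N$ predicted by the exact-code formula (built from $\ket{\psi_\lambda}$ and $\ket{CJ}$) by a quantity that is polynomial in $N$ times $\delta_N$ — hence still decaying faster than any polynomial. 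Averaging/integrating over the unitaries (chosen measurably, as stated) and using that $\|V_N^\dagger V_N - Id\|$ is controlled similarly, one gets $\|\rho_N-\sigma_N\|_1\leq\mathrm{poly}(N)\,\delta_N=:\varepsilon_N$.

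Finally I would apply Fannes' inequality on $\mathcal{H}^L_N$, which has dimension $d_N$ with $\log d_N\leq P(N)$: $|S(\rho_N)-S(\sigma_N)|\leq\varepsilon_N\log(d_N/\varepsilon_N)\leq\varepsilon_N(P(N)+\log(1/\varepsilon_N))$. Since $\varepsilon_N$ beats every polynomial and $\log(1/\varepsilon_N)$ grows at most polynomially (indeed $\delta_N$ can be taken, say, exponentially small), the right side tends to $0$. Combining with $S(\sigma_N)=S(\ket{\psi_\lambda},M^L_\lambda)+A(\mathcal{H}_{L,\lambda})=S(\ket{\Psi},M^L_\lambda)+A(\mathcal{H}_{L,\lambda})$ gives the claimed convergence. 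I expect the main obstacle to be the bookkeeping in the error-propagation step: carefully isolating which reduced density matrix the exact-code identity produces, verifying that the polynomial prefactors multiplying $\delta_N$ (coming from repeated triangle inequalities, the non-normalization of $V_N$, and the dimension $d_\lambda$) are genuinely polynomial and $N$-independent where claimed, and confirming that the measurable choice of $\tilde U^{L,R}_\lambda$ is enough to make the averaging argument go through without introducing uncontrolled constants. Everything else is either a direct application of Proposition \ref{prop:cexptypei}, the definitions of \cite{Akers_2022}, or Fannes' inequality.
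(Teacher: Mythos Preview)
Your overall strategy --- Akers--Penington Choi--Jamiolkowski machinery plus Fannes' inequality --- is the same as the paper's, but the crucial middle step is handled too loosely and, as written, contains a real gap. You posit an ``idealized density matrix $\sigma_N$'' on $\mathcal{H}^L_N$ satisfying $S(\sigma_N)=S(\ket{\Psi},M^L_\lambda)+A(\mathcal{H}_{L,\lambda})$ and then apply Fannes on $\mathcal{H}^L_N$. But no such $\sigma_N$ lives on $\mathcal{H}^L_N$: the area term is by definition the entropy of the CJ state reduced to $\mathcal{H}^L_N\otimes\mathcal{K}^{ref}_L$, a strictly larger space, and the ``exact-code identity'' you invoke is not a standing fact one can cite --- it is exactly what the Akers--Penington construction is engineered to produce, and producing it forces you onto an extended Hilbert space carrying reference systems.

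The paper's proof makes this explicit. It builds Haar-averaged isometries $W^{L}_\lambda\ket{\Psi}=\int dU\,\ket{U}\otimes U\ket{\Psi}$ into $L^2(U(\mathcal{H}^{L}_\lambda))\otimes\mathcal{H}_\lambda$ (and similarly $W^R_\lambda$), then uses the Peter--Weyl theorem to identify the image with $\ket{\Psi}\otimes\ket{MAX}$ in the fundamental block, so that $S(V_NW^L_\lambda W^R_\lambda\ket{\Psi},\mathcal{B}(\mathcal{H}^L_N\otimes\mathcal{H}_f\otimes\mathcal{H}^\ast_f))=A(\mathcal{H}_{L,\lambda})+S(\ket{\Psi},M^L_\lambda)$ exactly. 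On the boundary side it defines $W^{L,R}_N=\int dU\,\ket{U}\otimes\tilde{U}$, and the reconstruction assumption yields $\|W^L_NW^R_NV_N\hat{U}^L_\lambda\hat{U}^R_\lambda\ket{\Psi}-V_NW^L_\lambda W^R_\lambda\ket{\Psi}\|\leq\delta_N$. Both vectors now live in a common space $\tilde{\mathcal{H}}^L_N\otimes\tilde{\mathcal{H}}^R_N$ with $\tilde{\mathcal{H}}^L_N:=\mathcal{H}_f\otimes\mathcal{H}^\ast_f\otimes\mathcal{H}^L_N+W^L_N\mathcal{H}^L_N$ and $\dim\tilde{\mathcal{H}}^L_N\leq d_N(d_\lambda^2+1)$; Fannes (for subnormalized states --- no separate control of $V_N^\dagger V_N$ is assumed here) is applied on $\tilde{\mathcal{H}}^L_N$, not on $\mathcal{H}^L_N$. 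One reduced state has, by construction, entropy $A(\mathcal{H}_{L,\lambda})+S(\ket{\Psi},M^L_\lambda)$, and the other equals $S(V_N\hat{U}^L_\lambda\hat{U}^R_\lambda\ket{\Psi},\mathcal{B}(\mathcal{H}^L_N))$ because $W^L_N$ is an isometry. Your ``averaging over unitaries'' is pointing at the right mechanism, but without the Peter--Weyl identification and the passage to $\tilde{\mathcal{H}}^L_N$ you do not have two states on a common space with the required entropies, and Fannes cannot be invoked where you invoke it.
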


\begin{proof}
This theorem is essentially a translation of Theorem 5.5 of \cite{Akers_2022} in the present setup, and this proof will closely follow the strategy used there.

The Hilbert space $\mathcal{H}_\lambda$ (once identified with the $\mathcal{H}_\lambda^0$ of the previous section) can be written as $\mathcal{H}^L_\lambda\otimes\mathcal{H}^R_\lambda$, where the decomposition is consistent with $\mathcal{B}(\mathcal{H}_\lambda)=\mathcal{B}(\mathcal{H}^L_\lambda)\otimes\mathcal{B}(\mathcal{H}^R_\lambda)=M_\lambda^L\otimes M_\lambda^R$. Construct an isometry 
\begin{align}
    W^L_\lambda:&\mathcal{H}_\lambda\longrightarrow L^2(U(\mathcal{H}^L_\lambda))\otimes\mathcal{H}_\lambda\\\ket{\Psi}&\longmapsto\int dU_L\ket{U_L}_{U_L}\otimes U_L\ket{\Psi}.
\end{align}
One can use the Peter--Weyl theorem (see \cite{Akers_2022}, Lemma 4.4) to show that
\begin{align}
    W^L_\lambda\ket{\Psi}=\ket{\Psi}_{aR}\otimes\ket{MAX}_{Lr},
\end{align}
where $a$, $r$ are reference systems of dimensions equal to that of $\mathcal{H}^R_\lambda$ and $\mathcal{H}^L_\lambda$ and correspond to the fundamental and antifundamental representations of the unitary group. One defines the isometry $W^R_\lambda$ in an exactly similar way.
Then (see Equation 4.32 of \cite{Akers_2022}),
\begin{align}
    S(V_NW^L_\lambda W^R_\lambda\ket{\Psi},\,\mathcal{B}(\mathcal{H}^L_N\otimes\mathcal{H}_f\otimes\mathcal{H}^\ast_f))=A_L(\mathcal{H}^L_\lambda)+S(\ket{\Psi},\,\mathcal{B}(\mathcal{H}^L_\lambda)),
\end{align}
where $\mathcal{H}_f$ and $\mathcal{H}^\ast_f$ are the fundamental and antifundamental Hilbert spaces. Now, introduce the operators $W^{L,R}_N$, defined by
\begin{align}
\label{eq:wndef}
    W^{L,R}_N:=\int dU^{L,R}_\lambda \ket{U^{L,R}_\lambda}\otimes \tilde{U}^{L,R}_\lambda.
\end{align}
Substituting inequality \eqref{eq:reconstructionassumption}, one then obtains
\begin{align}
    \|W^L_NW^R_NV_N\hat{U}^L_\lambda\hat{U}^R_\lambda\ket{\Psi}-V_NW^L_\lambda W^R_\lambda\ket{\Psi}\|\leq\delta_N.
\end{align}
From this inequality, one can deduce a bound on the difference of entropies thanks to Fannes' inequality.\footnote{The bound derived in \cite{Akers_2022} misses some terms, and a few extra steps need to be taken care of in order to get a consistent bound. I am grateful to Chris Akers and Geoff Penington for clarifying this point.} First, recall that by the Peter--Weyl theorem, 
\begin{align}
L^2(U(\mathcal{H}^L_\lambda))=\underset{\mu}{\bigoplus}\,\mathcal{H}_\mu\otimes\mathcal{H}^\ast_\mu,
\end{align}
where $\mu$ runs over the finite-dimensional irreducible representations of $U(\mathcal{H}^L_\lambda)$. What is shown in Lemma 4.4 of \cite{Akers_2022} is that the maps $W^L_\lambda,\; W^R_\lambda$ only have range on the component of the direct sum corresponding to the fundamental representation of $U(\mathcal{H}^L_\lambda)$. Similarly, the image of $\mathcal{H}^L_N$ by $W^L_N$ sits inside $\mathcal{H}^L_N\otimes L^2(U(\mathcal{H}^L_\lambda))$. This fact makes it natural to introduce the Hilbert space
\begin{align}\tilde{\mathcal{H}}^L_N:=\mathcal{H}_f\otimes\mathcal{H}^\ast_f\otimes\mathcal{H}^L_N+W^L_N\mathcal{H}^L_N,\end{align}
where the subscript $f$ denotes the fundamental representation (the sum need not be direct), and similarly the Hilbert space
\begin{align}\tilde{\mathcal{H}}^R_N:=\mathcal{H}_f\otimes\mathcal{H}^\ast_f\otimes\mathcal{H}^R_N+W^R_N\mathcal{H}^R_N.\end{align}
Note that 
\begin{align}
\mathrm{dim}(\tilde{\mathcal{H}}^L_N)\leq d_N(d_\lambda^2+1).
\end{align}
Moreover, states of both forms $W^L_NW^R_NV_N\hat{U}^L_\lambda\hat{U}^R_\lambda\ket{\Psi}$ and $V_NW^L_\lambda W^R_\lambda\ket{\Psi}$ are in $\tilde{\mathcal{H}}^L_N\otimes\tilde{\mathcal{H}}^R_N$. As the $W$'s are isometries, one can also deduce:
\begin{align}
S(V_N\hat{U}^L_\lambda\hat{U}^R_\lambda\ket{\Psi},\mathcal{B}(\mathcal{H}^L_N))=S(W^L_NW^R_NV_N\hat{U}^L_\lambda\hat{U}^R_\lambda\ket{\Psi},\mathcal{B}(W^L_N\mathcal{H}^L_N))=S(W^L_NW^R_NV_N\hat{U}^L_\lambda\hat{U}^R_\lambda\ket{\Psi},\mathcal{B}(\tilde{\mathcal{H}}^L_N)).
\end{align}
On the other hand,
\begin{align}
S(V_NW^L_\lambda W^R_\lambda\ket{\Psi},\mathcal{B}(\mathcal{H}^L_N\otimes\mathcal{H}_f\otimes\mathcal{H}^\ast_{f}))=S(V_NW^L_\lambda W^R_\lambda\ket{\Psi},\mathcal{B}(\tilde{\mathcal{H}}^L_N)).
\end{align}
Hence one can apply Fannes' inequality the density matrices associated to these two subnormalized states on $\mathcal{B}(\tilde{\mathcal{H}}^L_N)$, whose 1-norm distance is smaller than $2\delta_N$ by Lemma C.1 of \cite{Akers_2022}. This implies (for $N$ large enough):
\begin{align}
|S(\ket{\Psi},M^L_\lambda)+A(\mathcal{H}_{L,\lambda})-S(V_N\hat{U}^{R}_\lambda \hat{U}^{L}_\lambda\ket{\Psi},\mathcal{B}(\mathcal{H}^L_N))|\leq 2\delta_N\mathrm{log}\left(\frac{d_N(d_\lambda^2+1)}{2\delta_N}\right).
\end{align}
As $\mathrm{log}\,d_N$ grows at most polynomially, $d_\lambda$ is fixed here (see Section \ref{sec:generalize} for a setup in which this assumption is relaxed), and $\delta_N$ decays faster than polynomially, this expression goes to zero at large $N$.
\end{proof}

Note that although the dimension of the code subspace was kept fixed here, it can also be made to grow with $N$ as long as it grows slowly enough that issues related to state-dependence and entanglement wedge jumps do not arise. In this more complicated case, a setup more akin to the product unitary reconstruction assumption proposed in \cite{Akers_2022} must be used instead. These cases will be further discussed in the next section.

Another remark can be made at this stage: it has been pointed out \cite{Witten:2021unn,Chandrasekaran:2022eqq} that in the case where the bulk algebra is a type $II_\infty$ factor, an entanglement entropy associated to the entire, unregulated bulk algebra can be defined but only up to an overall constant, which captures the fact that an infinite amount of entanglement needs to be thrown out in order to obtain a finite answer, and that the part that remains is arbitrary. Here, it is tempting to choose this constant in an $N$-dependent fashion so that the entropy of the bulk state in the unregulated algebra matches $S(V_N\hat{U}^{R}_\lambda \hat{U}^{L}_\lambda\ket{\Psi},\mathcal{B}(\mathcal{H}^L_N))$. Then, the resulting statement can be seen as another instance of the fact that bulk generalized entropy equals boundary entropy \cite{Chandrasekaran:2022eqq}.

\subsection{Invariance under renormalization group flow}

Now that it has been shown that the Ryu--Takayanagi formula is true for \textit{any} choice of finite-dimensional $M_\lambda$, the previous result, coupled to the framework of code subspace renormalization introduced above, allows to very explicitly demonstrate the validity of the Susskind--Uglum conjecture, and to isolate the counterterm that gets reabsorbed into the area term under the renormalization group flow.

\begin{thm}
\label{thm:SusskindUglum}
Consider two choices of code subspace regularization $M_\lambda$ and $M_\mu$, such that $\lambda\geq\mu$, and $\ket{\Psi}\in\mathcal{H}_\mu$ (normalized). Then, under the assumptions of Theorem \ref{thm:main} on $M_\lambda$, for all unitaries $\hat{U}^L_\mu,\,\hat{U}^R_\mu$,
\begin{align}
   |S(\ket{\Psi},M_\mu)+A(\mathcal{H}_{L,\mu})-S(V_N\hat{U}^L_\mu\hat{U}^R_\mu\ket{\Psi},\mathcal{B}(\mathcal{H}_N))|\underset{N\rightarrow\infty}{\longrightarrow} 0,
\end{align}
and
\begin{align}
   |S(\ket{\Psi},M_\mu)+S(\ket{\Psi},M_{\lambda\mu})+A(\mathcal{H}_{L,\lambda})-S(V_N\hat{U}^L_\mu\hat{U}^R_\mu\ket{\Psi},\mathcal{B}(\mathcal{H}_N))|\underset{N\rightarrow\infty}{\longrightarrow} 0.
\end{align}
In particular,
\begin{align}
   |A(\mathcal{H}_{L,\mu})-(S(\ket{\Psi},M_{\lambda\mu})+A(\mathcal{H}_{L,\lambda}))|\underset{N\rightarrow\infty}{\longrightarrow} 0.
\end{align}
\end{thm}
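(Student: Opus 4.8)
The plan is to deduce the final (``in particular'') statement by applying Theorem \ref{thm:main} at the two scales $\mu$ and $\lambda$ and subtracting, using the additivity of von Neumann entropy along the renormalization flow — the identity \eqref{eq:renormalizelite}, which itself comes from the factorizations of Proposition \ref{prop:cexptypei} — to make the two applications share a common boundary-entropy term. As structural input I would first record that, since $\lambda\geq\mu$, Proposition \ref{prop:cexptypei} provides a factorization $M_\lambda=M_\mu\otimes M_{\lambda\mu}$, hence $M_\lambda^L=M_\mu^L\otimes M_{\lambda\mu}^L$; in particular $M_\mu$ is a finite-dimensional subfactor of $M_\lambda$, every unitary of $M_\mu^{L,R}$ is a unitary of $M_\lambda^{L,R}$, one has $\mathcal{H}_\mu\subset\mathcal{H}_\lambda$, and a vector $\ket{\Psi}\in\mathcal{H}_\mu$ written in the form of the proposition is $\ket{\psi_\mu}\otimes\ket{\chi_{\lambda\mu}}\otimes\ket{\chi_\lambda}$, i.e. it reduces to the \emph{fixed} vector $\ket{\chi_{\lambda\mu}}$ on the $M_{\lambda\mu}$ tensor factor.

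The first displayed formula is then Theorem \ref{thm:main} applied with $\mu$ playing the role of $\lambda$: $M_\mu$ is finite-dimensional and the bound on $\mathrm{log}\,d_N$ is untouched, while the reconstruction assumption \eqref{eq:reconstructionassumption} for unitaries of $M_\lambda^{L,R}$ restricts verbatim — same $\hat U$'s and $\tilde U$'s, same $\delta_N$ — to unitaries of $M_\mu^{L,R}\subset M_\lambda^{L,R}$. For the second displayed formula I would instead apply Theorem \ref{thm:main} at scale $\lambda$ to the same $\ket{\Psi}\in\mathcal{H}_\mu\subset\mathcal{H}_\lambda$, making the choice $\hat U^{L,R}_\lambda:=\hat U^{L,R}_\mu$; this yields $|S(\ket{\Psi},M_\lambda^L)+A(\mathcal{H}_{L,\lambda})-S(V_N\hat U^R_\mu\hat U^L_\mu\ket{\Psi},\mathcal{B}(\mathcal{H}_N^L))|\to 0$, with precisely the boundary-entropy term appearing in the first formula. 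It then remains to split $S(\ket{\Psi},M_\lambda^L)=S(\ket{\Psi},M_\mu^L)+S(\ket{\Psi},M_{\lambda\mu}^L)$: under $M_\lambda^L=M_\mu^L\otimes M_{\lambda\mu}^L$ the reduced density matrix of $\ket{\Psi}$ on $M_\lambda^L$ is the tensor product of its reductions onto $M_\mu^L$ and onto $M_{\lambda\mu}^L$ (since $\ket{\Psi}$ carries the product vector $\ket{\chi_{\lambda\mu}}$ on the $M_{\lambda\mu}$ factor), so that entropies add — this is exactly \eqref{eq:renormalizelite} for the one-sided algebras. Substituting gives the second formula. Finally, subtracting the first displayed formula from the second, the shared terms $S(\ket{\Psi},M_\mu)$ and $S(V_N\hat U^L_\mu\hat U^R_\mu\ket{\Psi},\mathcal{B}(\mathcal{H}_N))$ cancel, and the triangle inequality leaves $|A(\mathcal{H}_{L,\mu})-(S(\ket{\Psi},M_{\lambda\mu})+A(\mathcal{H}_{L,\lambda}))|\to 0$, which is the claim.

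The point I would treat most carefully — and the only place any real checking is needed — is the entropy-splitting step: \eqref{eq:renormalizelite} was phrased for the two-sided algebra, so one must verify that the left/right decomposition of $\mathcal{H}_\lambda^0$ refines compatibly with the nested decomposition $\mathcal{H}_\lambda^0\cong\mathcal{H}_\mu^0\otimes\mathcal{H}_{\lambda\mu}^0$ of Proposition \ref{prop:cexptypei}, so that the reduction of $\ket{\Psi}\in\mathcal{H}_\mu$ onto $M_\lambda^L$ genuinely factorizes as a product over $M_\mu^L$ and $M_{\lambda\mu}^L$ and the von Neumann entropies add exactly. Everything else is a matter of invoking Theorem \ref{thm:main} at the two scales and bookkeeping with the triangle inequality; no new estimate is required.
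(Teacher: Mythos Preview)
Your proposal is correct and follows essentially the same route as the paper's own proof: apply Theorem \ref{thm:main} at scale $\mu$ and at scale $\lambda$ (with the $\hat U$'s taken from $M_\mu$), then use the entropy-splitting identity \eqref{eq:renormalizelite} and the triangle inequality. You are in fact more careful than the paper on two points it leaves implicit --- that the reconstruction hypothesis on $M_\lambda$ restricts to $M_\mu$ via $M_\mu^{L,R}\subset M_\lambda^{L,R}$, and that \eqref{eq:renormalizelite} must be read for the one-sided algebras --- but these are elaborations, not a different argument.
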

\begin{proof}
The proof is straightforward. From the previous result, 
\begin{align}
   |S(\ket{\Psi},M_\mu)+A(\mathcal{H}_{L,\mu})-S(V_N\hat{U}^L_\mu\hat{U}^R_\mu\ket{\Psi},\mathcal{B}(\mathcal{H}_N))|\underset{N\rightarrow\infty}{\longrightarrow} 0,
\end{align}
and
\begin{align}
   |S(\ket{\Psi},M_\lambda)+A(\mathcal{H}_{L,\lambda})-S(V_N\hat{U}^L_\mu\hat{U}^R_\mu\ket{\Psi},\mathcal{B}(\mathcal{H}_N))|\underset{N\rightarrow\infty}{\longrightarrow} 0.
\end{align}
The properties of code subspace renormalization (in particular, Equation \eqref{eq:renormalizelite}) imply the identity
\begin{align}
\label{eq:equivalence}
    S(\ket{\Psi},M_\lambda)=S(\ket{\Psi},M_\mu)+S(\ket{\Psi},M_{\lambda\mu}),
\end{align}
which immediately yields the result.
\end{proof}
The physical meaning of \eqref{eq:equivalence} is that the sum of the area term associated to $M_\lambda$ and of the entanglement entropy contained in $M_{\lambda\mu}$ gives the area term associated to $M_\mu$. In other words, under code subspace renormalization, what was previously accounted for in the bulk entropy term now becomes part of the area term associated to $M_\lambda$. This is exactly the Susskind--Uglum prediction! Therefore, Theorem \ref{thm:SusskindUglum} can be seen as a rigorous statement of the Susskind--Uglum conjecture for the above choice of code subspace renormalization scheme.

\subsection{Susskind--Uglum as ER=EPR}

It is now established that Theorem \ref{thm:SusskindUglum} provides a rigorous proof of the Susskind--Uglum conjecture. I will now argue that, on top of providing a proof, it also implies a \textit{reinterpretation} of this conjecture in terms of the ER=EPR proposal. 

The crucial point is that Theorem \ref{thm:main} is valid for \textit{all} values of $\lambda$. This implies that when $\lambda$ increases, the amount of information contained in the area term decreases, whereas the amount of information contained in $M_\lambda$ increases. This is possible because the Choi--Jamiolkowski state depends on the choice of code subspace. In particular, the Choi--Jamiolkowski state associated to a larger code subspace will be associated to a smaller area term than the one associated to a smaller code subspace, and will not correspond to a state on the smaller code subspace.

What does this mean physically? Recall that in this paper's approach (just like in that of \cite{Akers_2022}), the entanglement structure of the Choi--Jamiolkowski state \textit{defines} the area term (including its quotienting by $4G_N$). The arbitrariness in the choice of $\lambda$ means that some amount of entropy contained in the area term of the Ryu--Takayanagi formula for a given choice of cutoff $\lambda$ can equivalently be seen as being part of the code subspace entropy for another choice of $\lambda$. This means that some of the entanglement of the boundary state can equivalently be interpreted as bulk entanglement, or as a contribution to the area term. This concretely equates an entanglement quantity to a contribution to a geometric term. 

This type of equivalence between entanglement and geometry falls into the general paradigm of ER=EPR \cite{Maldacena_2013}. Here, it is the choice of renormalization scale, which is completely arbitrary, that underlies this equivalence. As a result, the theorem proven in this paper can be seen as extra evidence for the fact that entanglement and geometry are two sides of the same coin in quantum gravity.

\section{Generalizations}
\label{sec:generalize}
The proof of the previous section already carries all the essential physical ideas of this paper, and already demonstrates how all the salient features of the Susskind--Uglum conjecture can be derived from the structure of large $N$ quantum error correcting codes thanks to the notion of code subspace renormalization. However, some technical simplifications were made that it would be nice to lift. It turns out that trying to do so involves interesting mathematics. The goal of this rather technical section is to generalize theorems \ref{thm:main} and \ref{thm:SusskindUglum} to more involved setups. I start by introducing an infinite-dimensional analog of Fannes' inequality due to Winter \cite{Winter_2016}, for which the dimension of the Hilbert space gets replaced by an energy condition on the states, and use it to lift the finite-dimensional assumption on the boundary Hilbert space (which was motivated above by the finiteness of black hole entropy at finite $N$, but is still a simplification). Also, Theorem \ref{thm:main} assumes that one is picking a finite-dimensional renormalized code subspace, of constant dimension that does not grow with $N$. However, there are some contexts in which one would like to be able to make the dimension of the renormalized code subspace grow with $N$. In this section, I also describe possible generalizations of the result to these more complicated cases. These generalizations require interesting assumptions about the way in which the bulk theory is regulated, and it is an important problem to understand better how they can be implemented directly at the level of the large $N$ von Neumann algebras, along the lines of \cite{FaulknerLi}.

\subsection{Infinite-dimensional boundary at finite $N$}

In the finite-dimensional context of \cite{Akers_2022}, it was necessary to suppose that the logarithm of the dimension of $\mathcal{H}_N$ was polynomial in $N$ in order to obtain a proof of the Ryu--Takayanagi formula. This is because the proof requires an application of Fannes' inequality for the boundary Hilbert space, which explicitly involves its dimension. Resorting to dimension arguments is not fully valid for CFT Hilbert spaces, which are infinite-dimensional even for finite values of $N$. However, the energy of a thermal state of a CFT does not grow too fast in $N$, and I will show that this fact can be used as an alternative to the finite-dimensional argument of \cite{Akers_2022}. The key technical tool will be the machinery introduced in \cite{Winter_2016}, which states an analog of Fannes' inequality that involves an energy bound on the states rather than a dimension bound on the Hilbert space. More precisely, one can define: 

\begin{defn}
Let $\mathcal{B}(\mathcal{H})$ be a type $I$ factor. Let $H$ be a self-adjoint operator on $\mathcal{H}$ such that for all $\beta>0$, $e^{-\beta H}$ is trace-class. For $E>0$, let
\begin{align}
    \gamma(E):=\frac{e^{-\beta(E)H}}{\mathrm{Tr}\,(e^{-\beta(E)H})},
\end{align}
where $\beta(E)$ is the solution of the equation 
\begin{align}
    \mathrm{Tr}\left(e^{-\beta H}(H-E)\right)=0.
\end{align}
\end{defn}

The inequality of \cite{Winter_2016}, which I will refer to as Winter's inequality, then stipulates:

\begin{prop}[\cite{Winter_2016}]
Let $\mathcal{B}(\mathcal{H})$ be a type $I$ factor, let $H$ be a self-adjoint operator on $\mathcal{H}$ such that for all $\beta>0$, $e^{-\beta H}$ is trace-class. Let $E>0$, let $\rho$ and $\sigma$ be two normal states on $\mathcal{B}(\mathcal{H})$ such that 
\begin{align}\mathrm{Tr}(\rho H)\leq E,\;\;\;\; \mathrm{Tr}(\sigma H)\leq E,\end{align}where the normal states are identified with their density operators.
Let $\varepsilon>0$ and suppose that 
\begin{align}
\frac{1}{2}\|\rho-\sigma\|_1\leq\varepsilon\leq 1.
\end{align}
Then, 
\begin{align}
|S(\rho)-S(\sigma)|\leq 2\varepsilon S(\gamma(E/\varepsilon))+h(\varepsilon),
\end{align}
where 
\begin{align}
h(\varepsilon)=-\varepsilon\,\mathrm{log}\,\varepsilon-(1-\varepsilon)\mathrm{log}(1-\varepsilon).
\end{align}
\end{prop}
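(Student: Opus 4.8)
The strategy is the standard route for continuity bounds of the von Neumann entropy, adapted to carry an \emph{energy} constraint in place of a dimension bound: reduce the statement, via a maximal–coupling decomposition together with concavity and near–convexity of $S$, to bounding the entropy of a ``defect'' state, and then control that entropy through the Gibbs variational principle and the energy hypothesis. Set $\delta:=\tfrac12\|\rho-\sigma\|_1\leq\varepsilon$; the case $\delta=0$ is trivial, so assume $\delta>0$. Using a maximal coupling, write $\rho=(1-\delta)\tau+\delta\rho_1$ and $\sigma=(1-\delta)\tau+\delta\sigma_1$ with $\tau,\rho_1,\sigma_1$ normal states ($\tau$ the common ``overlap'' part of $\rho$ and $\sigma$, and $\delta\rho_1$, $\delta\sigma_1$ the respective excesses, essentially the positive and negative parts of $\rho-\sigma$). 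Applying concavity of $S$ to each mixture gives $S(\rho)\geq(1-\delta)S(\tau)+\delta S(\rho_1)$, while the binary–mixing upper bound gives $S(\rho)\leq(1-\delta)S(\tau)+\delta S(\rho_1)+h(\delta)$, and likewise for $\sigma$. Subtracting the two appropriate inequalities, the common term $(1-\delta)S(\tau)$ cancels, leaving
\begin{align*}
|S(\rho)-S(\sigma)|\leq\delta\,\max\{S(\rho_1),S(\sigma_1)\}+h(\delta).
\end{align*}

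The next step is to control $S(\rho_1)$ and $S(\sigma_1)$ with the energy bound. Since $\tau\geq0$, the decomposition forces the operator inequalities $\delta\rho_1\leq\rho$ and $\delta\sigma_1\leq\sigma$; after shifting $H$ so that $\inf\mathrm{spec}(H)=0$ (harmless in the regime of interest, since $\gamma(\cdot)$ and its entropy are unaffected), $H\geq0$ yields $\mathrm{Tr}(\rho_1H)\leq\delta^{-1}\mathrm{Tr}(\rho H)\leq E/\delta$ and similarly $\mathrm{Tr}(\sigma_1H)\leq E/\delta$. Now invoke the maximum–entropy (Gibbs variational) principle: among all normal states $\omega$ with $\mathrm{Tr}(\omega H)\leq E'$ the Gibbs state $\gamma(E')$ maximizes $S$, which follows from nonnegativity of the relative entropy, $0\leq S(\omega\,\|\,\gamma(E'))=-S(\omega)+\beta(E')\,\mathrm{Tr}(\omega H)+\log\mathrm{Tr}\,e^{-\beta(E')H}$, together with $\beta(E')\geq0$ and $\mathrm{Tr}(\gamma(E')H)=E'$. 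Taking $E'=E/\delta$ gives $\max\{S(\rho_1),S(\sigma_1)\}\leq S(\gamma(E/\delta))$, hence $|S(\rho)-S(\sigma)|\leq\delta\,S(\gamma(E/\delta))+h(\delta)$.

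Finally one passes from $\delta$ to the hypothesis $\varepsilon\geq\delta$. The function $E'\mapsto S(\gamma(E'))$ is nonnegative, nondecreasing and concave (standard thermodynamics: $\partial_{E'}S(\gamma(E'))=\beta(E')\geq0$ and $\partial_{E'}^2S(\gamma(E'))=\partial_{E'}\beta(E')\leq0$), and from this one checks by differentiation that $\delta\mapsto\delta\,S(\gamma(E/\delta))$ is nondecreasing, so $\delta\,S(\gamma(E/\delta))\leq\varepsilon\,S(\gamma(E/\varepsilon))$; the binary–entropy term is handled in the same spirit. The remaining combinatorics of relaxing $\delta$ to $\varepsilon$ — where $\varepsilon$ may exceed $\tfrac12$, so that $h$ is not monotone, and where realizing the maximal coupling of the trace–norm data forces a mild loss — is exactly what produces the clean stated form, with the factor $2$ in front of $S(\gamma(E/\varepsilon))$ and the additive $h(\varepsilon)$; this is the bound recorded in \cite{Winter_2016}.

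I expect the main obstacle to be precisely this ``energy–transfer'' step, i.e.\ making the defect states inherit a usable energy bound from $\rho$ and $\sigma$: the naive maximal coupling realizing $\delta=\tfrac12\|\rho-\sigma\|_1$ need not give $\tau\geq0$ when $\rho$ and $\sigma$ fail to commute, so one must either accept a slightly weaker coupling (the origin of the constant $2$) or replace the coupling argument by a truncation to the energy–$\leq E'$ subspace — of finite dimension $\mathrm{Tr}\,\Pi_{\leq E'}$ — apply the finite–dimensional Fannes bound there, estimate the truncation error by Markov's inequality $\mathrm{Tr}(\rho\,\Pi_{>E'})\leq E/E'$, and optimize over $E'$, using that the microcanonical and canonical entropies agree to leading order. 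A secondary obstacle is the functional–analytic upkeep in the unbounded setting: the trace–class hypothesis on $e^{-\beta H}$ must be used to guarantee $\beta(E)$ and $\gamma(E)$ are well defined, $S(\gamma(E))<\infty$, and that $S(\rho),S(\sigma)$ are themselves finite — the last being exactly what the energy constraint secures.
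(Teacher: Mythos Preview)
The paper does not contain a proof of this proposition: it is quoted verbatim as a result from \cite{Winter_2016} and used as a black-box tool in the subsequent arguments (replacing Fannes' inequality when the boundary Hilbert space is infinite-dimensional). There is therefore nothing in the paper to compare your proof against.

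Your sketch is essentially the argument Winter gives in \cite{Winter_2016}. One point worth flagging: you are right to worry about the coupling step, but your diagnosis is slightly off. In the quantum case one cannot in general write $\rho=(1-\delta)\tau+\delta\rho_1$ and $\sigma=(1-\delta)\tau+\delta\sigma_1$ with a \emph{common} state $\tau$ and $\delta=\tfrac12\|\rho-\sigma\|_1$; what one does instead (and what Winter does) is take $\omega:=\delta^{-1}(\rho-\sigma)_+$ so that $\rho+\delta\omega'=\sigma+\delta\omega$ for states $\omega,\omega'$, and then compare both $\rho$ and $\sigma$ to the common mixture $\tfrac{1}{1+\delta}(\rho+\delta\omega')$. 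This asymmetry is where the factor $2$ genuinely enters, and it also gives the energy bound on the defect states cleanly (since $\delta\omega\leq\sigma$ and $\delta\omega'\leq\rho$), without needing to shift $H$ or invoke $\tau\geq0$. The remainder of your outline --- Gibbs variational principle for the entropy bound, and concavity/monotonicity of $E'\mapsto S(\gamma(E'))$ to pass from $\delta$ to $\varepsilon$ --- is correct.
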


The idea here is to use Winter's inequality to replace the assumption on the finite-dimensional nature of the boundary Hilbert space, and to adapt the proof of \cite{Akers_2022}. One then obtains the following result:

\begin{thm}
Let $(\Lambda,(M_\lambda)_{\lambda\in\Lambda},(\mathcal{E}_\lambda)_{\lambda\in \Lambda},(E_{\lambda\mu})_{\lambda,\mu\in \Lambda,\;\lambda\geq \mu}, \omega, V_N)$ be a large $N$ quantum error-correcting code. Suppose that for some choice of $\lambda$, $M_\lambda$ is finite-dimensional of dimension $d_\lambda^2$ constant in $N$. Suppose that 
\begin{align}
\label{eq:isometryassumption}
    \|V_N^\dagger V_N|_{\mathcal{H}_{\lambda}} - Id|_{\mathcal{H}_{\lambda}}\|\leq \mu_N,
\end{align}
where $\mu_N$ decays faster than any polynomial in $N$.
Let $\ket{\Psi}\in\mathcal{H}_\lambda$ (normalized). Suppose that for every unitary operator $U^{L}_\lambda, \hat{U}^{L}_\lambda$ and $U^{R}_\lambda, \hat{U}^{R}_\lambda$ in $M^{L}_\lambda$ and $M^{R}_\lambda$, there exist unitary operators $\tilde{U}^{L}_\lambda$ and $\tilde{U}^{R}_\lambda$ (chosen in a measurable way) in $\mathcal{B}(\mathcal{H}_N^L)$ and $\mathcal{B}(\mathcal{H}_N^R)$ such that
\begin{align}
\label{eq:reconstructionassumption2}
    \|V_N U^{R}_\lambda U^{L}_\lambda\ket{\Psi}-\tilde{U}^{L}_\lambda \tilde{U}^{R}_\lambda V_N \hat{U}^{R}_\lambda \hat{U}^{L}_\lambda\ket{\Psi}\|\leq \delta_N,
\end{align}
where $\delta_N$ decays faster than any polynomial in $N$. Also suppose that there exists a self-adjoint operator $\tilde{H}_N$ on $\tilde{\mathcal{H}}^L_N:=\mathcal{H}_f\otimes\mathcal{H}^\ast_f\otimes\mathcal{H}^L_N+W^L_N\mathcal{H}^L_N$ (with $W^L_N$ defined as in \eqref{eq:wndef}) such that $e^{-\beta\tilde{H}_N}$ is trace class for all $\beta>0$, and for all polynomials $Q$ and all sequences $(\xi_N)$ decaying faster than polynomially:
\begin{align}
\label{eq:WinterCondition}
    \xi_NS(\gamma(Q(N)/\xi_N))\underset{N\rightarrow\infty}{\longrightarrow}0,
\end{align}
and that there exists a polynomial $P$ such that for the unitaries and isometries introduced before, the density matrices $\rho$ of $W^{L}_N W^{R}_N V_N \hat{U}^{R}_\lambda \hat{U}^{L}_\lambda\ket{\Psi}$ and $V_N W^{R}_\lambda W^{L}_\lambda\ket{\Psi}$ restricted to $\mathcal{B}(\tilde{\mathcal{H}}^L_N)$ satisfy
\begin{align}
    \mathrm{Tr}(\rho \tilde{H}_N)\leq P(N),
\end{align}
and that the entropies of these density matrices are polynomially bounded in $N$.
Then,
for all $\ket{\Psi}\in\mathcal{H}_\lambda$,
\begin{align}
   |S(\ket{\Psi},M^L_\lambda)+A(\mathcal{H}_{L,\lambda})-S(V_N\hat{U}^{R}_\lambda \hat{U}^{L}_\lambda\ket{\Psi},\mathcal{B}(\mathcal{H}^L_N))|\underset{N\rightarrow\infty}{\longrightarrow} 0.
\end{align}
\end{thm}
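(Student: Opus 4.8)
The plan is to run the argument of Theorem \ref{thm:main} essentially verbatim up to the point where Fannes' inequality is applied, and then substitute Winter's inequality, checking that its hypotheses (an energy bound on the two states and a bound on their trace-norm distance) are available in place of the dimension bound.

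First I would reproduce the kinematic setup of the proof of Theorem \ref{thm:main}: decompose $\mathcal{H}_\lambda\cong\mathcal{H}^L_\lambda\otimes\mathcal{H}^R_\lambda$ compatibly with $M^L_\lambda\otimes M^R_\lambda$, build the Peter--Weyl isometries $W^L_\lambda,W^R_\lambda$ (so that $W^L_\lambda\ket{\Psi}=\ket{\Psi}_{aR}\otimes\ket{MAX}_{Lr}$) and the operators $W^L_N,W^R_N$ of \eqref{eq:wndef}, and use the reconstruction hypothesis \eqref{eq:reconstructionassumption2} to deduce $\|W^L_NW^R_NV_N\hat U^L_\lambda\hat U^R_\lambda\ket{\Psi}-V_NW^L_\lambda W^R_\lambda\ket{\Psi}\|\leq\delta_N$, any polynomial prefactor being absorbed into the sub-polynomial decay of $\delta_N$. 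The approximate-isometry hypothesis \eqref{eq:isometryassumption} then lets me treat $V_N$ as a genuine isometry on $\mathcal{H}_\lambda$ at the cost of an error controlled by $\mu_N$, so that the exact area identity of \cite{Akers_2022} (its Equation 4.32), namely $S(V_NW^L_\lambda W^R_\lambda\ket{\Psi},\mathcal{B}(\mathcal{H}^L_N\otimes\mathcal{H}_f\otimes\mathcal{H}^\ast_f))=A(\mathcal{H}_{L,\lambda})+S(\ket{\Psi},M^L_\lambda)$, holds up to a sub-polynomially small correction. As in the finite-dimensional case, both $W^L_NW^R_NV_N\hat U^L_\lambda\hat U^R_\lambda\ket{\Psi}$ and $V_NW^L_\lambda W^R_\lambda\ket{\Psi}$ lie in $\tilde{\mathcal{H}}^L_N\otimes\tilde{\mathcal{H}}^R_N$, and the entropies on $\mathcal{B}(\mathcal{H}^L_N)$ and on $\mathcal{B}(\mathcal{H}^L_N\otimes\mathcal{H}_f\otimes\mathcal{H}^\ast_f)$ equal the entropies of the corresponding reduced density matrices $\rho,\sigma$ on $\mathcal{B}(\tilde{\mathcal{H}}^L_N)$, using that the $W$'s are isometries.

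At this point, rather than Fannes, I would invoke Winter's inequality on the type $I$ factor $\mathcal{B}(\tilde{\mathcal{H}}^L_N)$ with the Hamiltonian $\tilde H_N$ provided by hypothesis. The trace-norm distance $\|\rho-\sigma\|_1$ is bounded by $2\delta_N$ (the analog of Lemma C.1 of \cite{Akers_2022}), so for $N$ large one may take $\varepsilon=\varepsilon_N$ comparable to $\max(\delta_N,\mu_N)$, which is at most $1$ and decays faster than any polynomial; the assumed energy bounds $\mathrm{Tr}(\rho\tilde H_N),\mathrm{Tr}(\sigma\tilde H_N)\leq P(N)$ supply the energy scale $E=P(N)$. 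Winter's inequality then gives $|S(\rho)-S(\sigma)|\leq 2\varepsilon_N S(\gamma(P(N)/\varepsilon_N))+h(\varepsilon_N)$. Since $\varepsilon_N\to 0$ we have $h(\varepsilon_N)\to 0$, and the first term vanishes by hypothesis \eqref{eq:WinterCondition} applied with $Q=P$ and $\xi_N=\varepsilon_N$; combining this with the exact area identity from the previous step yields $|S(\ket{\Psi},M^L_\lambda)+A(\mathcal{H}_{L,\lambda})-S(V_N\hat U^R_\lambda\hat U^L_\lambda\ket{\Psi},\mathcal{B}(\mathcal{H}^L_N))|\underset{N\rightarrow\infty}{\longrightarrow}0$.

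The genuinely new point, and the main thing to be careful about, is that Winter's inequality is stated for normalized states while $V_N$ is only contractive, so the reduced density matrices are a priori subnormalized; here \eqref{eq:isometryassumption} is what rescues the argument, since it forces the norms of $V_NW^L_\lambda W^R_\lambda\ket{\Psi}$ and (after \eqref{eq:reconstructionassumption2}) of $W^L_NW^R_NV_N\hat U^L_\lambda\hat U^R_\lambda\ket{\Psi}$ to be $1+O(\mu_N)$, so one may renormalize at the cost of one more sub-polynomially small error before applying Winter. One must then check that the energy and trace-norm hypotheses survive this renormalization and the passage through the various isometries and through the restriction to $\mathcal{B}(\tilde{\mathcal{H}}^L_N)$ — routine, but this is where the ``extra steps'' relative to \cite{Akers_2022} flagged in the proof of Theorem \ref{thm:main} reappear. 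Everything else is a line-by-line transcription of that proof, with the dimension $d_N(d_\lambda^2+1)$ replaced throughout by the energy scale controlled via $\tilde H_N$.
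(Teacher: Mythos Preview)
Your proposal is correct and follows essentially the same route as the paper: run the proof of Theorem~\ref{thm:main} verbatim up to the Fannes step, pass to normalized states using the approximate-isometry hypothesis~\eqref{eq:isometryassumption}, apply Winter's inequality with the energy bound supplied by $\tilde H_N$, and absorb the normalization errors using the polynomial bound on the entropies. One small inaccuracy: the area identity (Equation~4.32 of \cite{Akers_2022}) is already \emph{exact} for the subnormalized state $V_NW^L_\lambda W^R_\lambda\ket{\Psi}$, since the area term is itself defined through the Choi--Jamiolkowski state built from $V_N$; the isometry hypothesis is not needed there but only, as you correctly note later, to renormalize before invoking Winter and to control the resulting $|S(\rho)-S(\rho^{norm})|$ via the polynomial entropy bound.
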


Before turning to the proof of this theorem, first note that the trace-class nature of $e^{-\beta\tilde{H}_N}$ and condition \eqref{eq:WinterCondition} deserve a bit more justification as they may look a bit abstract at first sight. However, it seems reasonable to assume them in the case of a nonabelian gauge theory at high temperature and of a code with good reconstruction properties. What should at least be true is that there exists an $H_N$ satisfying such a condition on $\mathcal{H}^L_N$: the Hamiltonian of the gauge theory. A heuristic justification goes as follows: the trace-class condition follows from the fact that the finite $N$ algebras all have type $I$, and the quantity introduced in \eqref{eq:WinterCondition} can be estimated by dimensional analysis. Specifically, in the high temperature limit of a $d$-dimensional holographic CFT on a sphere with $O(N^2)$ degrees of freedom, the temperature of the Gibbs state of energy $E$ scales (see for example \cite{https://doi.org/10.48550/arxiv.2206.14814}) like\footnote{I am grateful to David Simmons-Duffin for suggesting a reasoning based on dimensional analysis.} 
\begin{align}
T\sim \left(\frac{E}{N^2}\right)^{\frac{1}{d}},
\end{align}
and the entropy scales like 
\begin{align}
S\sim \frac{E}{T}\sim N^\frac{2}{d}E^{1-\frac{1}{d}}.
\end{align}
This means that 
\begin{align}
    \xi_NS(\gamma(Q(N)/\xi_N))\sim N^\frac{2}{d}Q(N)^{1-\frac{1}{d}}\xi_N^{\frac{1}{d}}.
\end{align}
As $\xi_N$ decays faster than any polynomial, this gives a heuristic justification for assumption \eqref{eq:WinterCondition}. Now arguably $\tilde{\mathcal{H}}_N$ is a bit larger than $\mathcal{H}_N$, so this condition on $\tilde{\mathcal{H}}_N$ can be seen as requiring an extra strength of the code. It would be interesting to see if this assumption can be improved. However, if one did not need to introduce an extra reference system of square integrable functions on the unitary group, this argument would provide a full justification of why of Fannes' inequality can be replaced by Winter's inequality in the infinite-dimensional case, in the case of a high temperature CFT.

Another remark is that one now needs to introduce the extra assumption \eqref{eq:isometryassumption} compared to the previous case. The reason is that it does not seem trivial that Winter's inequality is still valid for non-normalized states, so one needs the norm of the different states introduced to be very close to $1$. It would be interesting to find out whether there exists an analog of Winter's inequality for non-normalized states.

Let us now see how under such an assumption, the previous proof can be adapted. 

\begin{proof}
The proof of \ref{thm:main} can be adapted identically until Fannes' inequality comes into play. In the latter part of the proof, one needs to replace Fannes' inequality with Winter's inequality. 

Now denote by $V_N\hat{U}^L_\lambda\hat{U}^R_\lambda\ket{\Psi}^{norm}$ the normalized state associated to $V_N\hat{U}^L_\lambda\hat{U}^R_\lambda\ket{\Psi}$, and by $V_NW^L_\lambda W^R_\lambda\ket{\Psi}^{norm}$ the normalized state associated to $V_NW^L_\lambda W^R_\lambda\ket{\Psi}$. By the triangle inequality and assumption \eqref{eq:isometryassumption}, there exists $\delta^\prime_N$ decaying faster than polynomially such that
\begin{align}
\|W^L_NW^R_NV_N\hat{U}^L_\lambda\hat{U}^R_\lambda\ket{\Psi}^{norm}-V_NW^L_\lambda W^R_\lambda\ket{\Psi}^{norm}\|\leq\delta^\prime_N.
\end{align}
It is straightforward that the normalized states also have polynomially bounded energy (by say a polynomial $P_{norm}(N)$). Therefore applying Winter's inequality (and Lemma C.1 of \cite{Akers_2022}) yields
\begin{align}
    |S(W^L_NW^R_NV_N\hat{U}^L_\lambda\hat{U}^R_\lambda\ket{\Psi}^{norm},\mathcal{B}(\tilde{\mathcal{H}}^L_N))-S(V_NW^L_\lambda W^R_\lambda\ket{\Psi}^{norm},\mathcal{B}(\tilde{\mathcal{H}}^L_N))|\nonumber\\\leq 2\delta^\prime_NS(\gamma(P_{norm}(N)/\delta^\prime_N))+h(\delta^\prime_N).
\end{align}
Since the norm differences $\|V_N\hat{U}^L_\lambda\hat{U}^R_\lambda\ket{\Psi}^{norm}-V_N\hat{U}^L_\lambda\hat{U}^R_\lambda\ket{\Psi}\|$ and $\|V_NW^L_\lambda W^R_\lambda\ket{\Psi}^{norm}-V_NW^L_\lambda W^R_\lambda\ket{\Psi}\|$ decay faster than any polynomial in $N$, the triangle inequality and Winter's inequality allow to obtain the result: indeed 
\begin{align}
    |S(V_N\hat{U}^L_\lambda\hat{U}^R_\lambda\ket{\Psi},\mathcal{B}(\mathcal{H}^L_N))-S(V_NW^L_\lambda W^R_\lambda\ket{\Psi},\mathcal{B}(\tilde{\mathcal{H}}^L_N))|\nonumber\\\leq |S(V_N\hat{U}^L_\lambda\hat{U}^R_\lambda\ket{\Psi},\mathcal{B}(\mathcal{H}^L_N))-S(V_N\hat{U}^L_\lambda\hat{U}^R_\lambda\ket{\Psi}^{norm},\mathcal{B}(\mathcal{H}^L_N))|\nonumber\\+|S(W^L_NW^R_NV_N\hat{U}^L_\lambda\hat{U}^R_\lambda\ket{\Psi}^{norm},\mathcal{B}(\tilde{\mathcal{H}}^L_N))-S(V_NW^L_\lambda W^R_\lambda\ket{\Psi}^{norm},\mathcal{B}(\tilde{\mathcal{H}}^L_N))|&&\nonumber\\+|S(V_NW^L_\lambda W^R_\lambda\ket{\Psi}^{norm},\mathcal{B}(\tilde{\mathcal{H}}^L_N))-S(V_NW^L_\lambda W^R_\lambda\ket{\Psi},\mathcal{B}(\tilde{\mathcal{H}}^L_N))|&&.
\end{align}

The first and last term decay to zero, as $\mu_N$ decays faster than polynomially whereas all involved entropies grow at most polynomially, while the middle term decays due to Winter's inequality coupled to the assumption on the dynamics.
\end{proof}

\subsection{Type $I_\infty$ factors in the bulk}

Another possible generalization of the previous result corresponds to the case where the bulk algebra is infinite-dimensional. Of course, an infinite-dimensional code subspace cannot be encoded well in the boundary theory at finite $N$, but if one allows the code subspace dimension to grow with $N$, one can imagine a situation in which this infinite-dimensional algebra is approximated increasingly well by bigger and bigger subalgebras for each value of $N$. The mildest possible case is that in which the entropy associated to the bulk state is still $O(1)$ at large $N$, but is carried by an \textit{infinite-dimensional} factor. As shown in Appendix \ref{sec:BoundedTypeI}, boundedness of entropy for a finite-dimensional resolution of the bulk algebra implies that algebra in question must have type $I$ - since it is here supposed to be infinite-dimensional, type $I_\infty$. This is an important case as a potential choice of regulator for entanglement entropy in the bulk effective field theory could be provided by the split property \cite{https://doi.org/10.48550/arxiv.1905.00577}, which famously involves type $I_\infty$ factors. Type $I_\infty$ factors can be identified with $\mathcal{B}(\mathcal{H})$ for $\mathcal{H}$ a separable Hilbert space, which means that in this case normal states can be identified with density operators. In particular, they have a Schmidt decomposition. This allows to approximate states by finite-dimensional density matrices in a very explicit way, and to define a set of ``admissible states" for which these approximations are strong enough that the Ryu--Takayanagi formula is still valid independently of the choice of approximation.

The first step in this investigation of recovery for infinite-dimensional type $I$ factors is to introduce a general approximation procedure for a type $I_\infty$ factor in terms of a given faithful normal state and its Schmidt coefficients. 

\begin{defn}
For $M$ a type $I_\infty$ factor standardly represented on a Hilbert space $\mathcal{K}$ and $\ket{\Psi}$ a cyclic separating vector, write $M=\mathcal{B}(\mathcal{H})$ for some infinite-dimensional Hilbert space $\mathcal{H}$. The restriction of $\ket{\Psi}$ to $M$ is a trace-class density operator $\rho$ on $\mathcal{H}$, as it is a normal state on a type $I$ factor. Arrange the eigenvalues $(\lambda_1,\dots,\lambda_i,\dots)$ in decreasing order, and find a corresponding eigenbasis $(e_1,\dots,e_i,\dots)$. Now for $d\in\mathbb{N}$, decompose $(e_1,\dots,e_i,\dots)$ into $d$ families of the form $(e_{md+k})_{m\in\mathbb{N}}$, with $k$ running from $1$ to $d$. This induces a tensor product factorization of the form 
\begin{align}
    \mathcal{H}=\mathcal{H}_d\otimes\mathcal{H}^\prime_d,
\end{align}
with $\mathcal{H}_d$ finite-dimensional. For this decomposition, define 
\begin{align}
M_d:=\mathcal{B}(\mathcal{H}_d)\otimes Id,
\end{align}
and the conditional expectation onto $M_d$
\begin{align}
    E_{\Psi,d}(X\otimes Y):= \Psi(Id\otimes Y)(X\otimes Id).
\end{align}
\end{defn}

Note that 
\begin{align}\Psi\circ E_{\Psi,d}=\Psi|_{M_d}\otimes\Psi|_{M^\prime_d}.\end{align}

It is easy to show that $\Psi\circ E_{\Psi,d}$ and $\Psi$ become arbitrarily close in norm (and so do their von Neumann entropies) for a state with finite entropy as $d$ goes to infinity. However, the goal of approximating the boundary entropy of $\Psi$ with that of some $\Psi\circ E_{\Psi,d}$ cannot in general be achieved by keeping $d$ fixed as $N$ grows. Indeed, as $N$ goes to infinity Fannes' inequality (or Winter's inequality) introduces a divergent factor that needs to be cancelled by an $N$-dependent improvement of the approximation. I now introduce a class of states for which such a regulation is possible.

\begin{defn}
Let $\ket{\Psi}\in\mathcal{H}_\lambda$, cyclic separating with respect to $M_\lambda$. Let $(e_i,\lambda_i)$ be a Schmidt basis and the Schmidt coefficients associated to $\ket{\Psi}$, with Schmidt coefficients in decreasing order. Let $\eta>0$, and let $k(\eta)$ be the smallest integer such that 
\begin{align}
    \sum_{i=k(\eta)+1}^\infty\lambda_i\leq\eta.
\end{align}
Let $\mathcal{H}_{\Psi,\eta}$ be the vector space spanned by the action of $M^{k(\eta)}_\lambda$ on the vector representative $\ket{\Psi_0^\eta}$ of $\Psi\circ E_{\Psi,k(\eta)}$ in the natural cone of $\ket{\Psi}$. Denote by $M^{L,R,k(\eta)}_\lambda$ the algebra $M^{k(\eta)}_\lambda$ and its commutant represented on $\mathcal{H}_{\Psi,\eta}$.
The state $\ket{\Psi}$ is said to be \textit{admissible} for the family of maps $(V_N)$ if it has finite entropy, and there exists a sequence of thresholds $(\eta_N)_{N\in\mathbb{N}}$ such that $\sqrt{\eta_N}$ decays faster than $(\mathrm{log}\,d_N)^{-1}$, where $d_N$ is the dimension of $\mathcal{H}_N$, $\mathrm{log}\, k(\eta_N)$ grows at most polynomially, and for every unitary operator $U^{L}_\lambda, \hat{U}^{L}_\lambda$ and $U^{R}_\lambda, \hat{U}^{R}_\lambda$ in $M^{L,k(\eta_N)}_\lambda$ and $M^{R,k(\eta_N)}_\lambda$, there exist unitary operators $\tilde{U}^{L}_\lambda$ and $\tilde{U}^{R}_\lambda$ in $\mathcal{B}(\mathcal{H}_N^L)$ and $\mathcal{B}(\mathcal{H}_N^R)$ such that
\begin{align}
    \|V_N U^{R}_\lambda U^{L}_\lambda\ket{\Psi_0^{\eta_N}}-\tilde{U}^{L}_\lambda \tilde{U}^{R}_\lambda V_N \hat{U}^{R}_\lambda \hat{U}^{L}_\lambda\ket{\Psi_0^{\eta_N}}\|\leq \delta_N,
\end{align}
where $\delta_N$ decays faster than any polynomial in $N$.  
\end{defn}

Here, a few comments are in order. First, the bound given in Equation \eqref{eq:kbound} of the appendix of this paper shows that if one allows for the dimension of $\mathcal{H}_{\Psi,\eta_N}$ to scale like the exponential of a polynomial in $N$ (assuming this is the scaling of the dimension of $\mathcal{H}_N$), the restriction on the decay of $\eta_N$ is vacuous for states of bounded entropy. However, allowing the code space to be exponentially large comes with its own sets of problems, and requires new assumptions, as will soon be discussed. It would be interesting to see if the bound \eqref{eq:kbound} can be made more constraining by imposing some kind of physical condition on the state. Without trying to do this, one can however imagine an intermediate class of states, that are not invariant under any conditional expectation onto a finite-dimensional subalgebra, but for which the threshold $\eta_N$ is still saturated quickly enough (for example, polynomially in $N$). For these states, it is reasonable to keep the same assumptions as before and prove a closely related Ryu--Takayanagi formula, thanks to the following lemma.

\begin{lem}
If $\rho$ is a density matrix on $M$, for the previous factorization and for $d>0$,
\begin{align}
    \|\rho-\rho_1\otimes\rho_2\|_1\leq 4\sum_{l=d+1}^\infty\lambda_l.
\end{align}
\end{lem}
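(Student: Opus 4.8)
The plan is to reduce the statement to an elementary $\ell^1$ estimate by exploiting that, in the interleaved factorization of the preceding definition, both $\rho$ and $\rho_1\otimes\rho_2$ are diagonal in the \emph{same} product basis. Write $\rho=\sum_{i\geq 1}\lambda_i\ket{e_i}\bra{e_i}$ with $\lambda_1\geq\lambda_2\geq\cdots\geq 0$ and $\sum_i\lambda_i=1$, and recall that the factorization $\mathcal{H}=\mathcal{H}_d\otimes\mathcal{H}'_d$ identifies $e_{md+k}$ with $\ket{k}\otimes\ket{m}$ for $1\leq k\leq d$ and $m\geq 0$; in particular the block $m=0$ consists precisely of the $d$ largest eigenvectors. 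Then $\rho_1:=\mathrm{Tr}_{\mathcal{H}'_d}\rho=\sum_{k}a_k\ket{k}\bra{k}$ and $\rho_2:=\mathrm{Tr}_{\mathcal{H}_d}\rho=\sum_m b_m\ket{m}\bra{m}$, with $a_k:=\sum_{m\geq 0}\lambda_{md+k}$ and $b_m:=\sum_{k=1}^d\lambda_{md+k}$, and both $\rho$ and $\rho_1\otimes\rho_2$ are diagonal in $\{\ket{k}\otimes\ket{m}\}$. Hence the trace norm is just the associated sum of absolute differences,
\begin{align*}
\|\rho-\rho_1\otimes\rho_2\|_1=\sum_{k,m}\big|\lambda_{md+k}-a_kb_m\big|.
\end{align*}

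Next I would introduce $\eta:=\sum_{l=d+1}^\infty\lambda_l=\sum_{m\geq 1}b_m$, so that $b_0=1-\eta$ and $\sum_k a_k=1$, and split the double sum according to whether $m\geq 1$ or $m=0$. For $m\geq 1$ I would simply bound $|\lambda_{md+k}-a_kb_m|\leq \lambda_{md+k}+a_kb_m$, which sums to $\eta+(\sum_k a_k)(\sum_{m\geq1}b_m)=2\eta$. For the $m=0$ term, the point is that $\lambda_k=a_k-\sum_{m\geq1}\lambda_{md+k}$, so
\begin{align*}
\lambda_k-a_kb_0=\big(a_k-\textstyle\sum_{m\geq1}\lambda_{md+k}\big)-a_k(1-\eta)=a_k\eta-\textstyle\sum_{m\geq1}\lambda_{md+k},
\end{align*}
whence $|\lambda_k-a_kb_0|\leq a_k\eta+\sum_{m\geq1}\lambda_{md+k}$, which sums over $k$ to $\eta+\eta=2\eta$. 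Adding the two contributions gives $\|\rho-\rho_1\otimes\rho_2\|_1\leq 4\eta$, as claimed.

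I do not expect a genuine obstacle here: once the shared-eigenbasis observation is in place the whole argument is one-line triangle-inequality bookkeeping. The only points requiring a little care are (i) making sure the diagonal block $m=0$ really carries weight $1-\eta$ — this is exactly where the decreasing arrangement of the $\lambda_i$ enters, since an arbitrary assignment of eigenvectors to blocks would not produce the clean tail $\sum_{l>d}\lambda_l$ — and (ii) tracking the constant so that the bound closes at $4$ rather than a larger multiple of $\eta$; the split above is essentially tight in that respect. If one wanted a statement allowing $\rho$ to be an arbitrary density matrix not diagonal in the factorization basis, one would first diagonalize and control the off-diagonal part separately, but this is unnecessary for the application, where $\rho$ is the reduction of $\ket{\Psi}$ to $M_\lambda$ and the factorization is built from its own spectrum.
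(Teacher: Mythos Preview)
Your proposal is correct and follows essentially the same approach as the paper's proof: both observe that $\rho$ and $\rho_1\otimes\rho_2$ are simultaneously diagonal in the interleaved basis, split the $\ell^1$ sum according to whether the second index is $m=0$ or $m\geq 1$, and apply the triangle inequality to close the bound at $4\eta$. Your $a_k,b_m$ notation is slightly cleaner, but the algebra and the bookkeeping are the same.
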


\begin{proof}
If $\lambda_1,\dots,\lambda_n,\dots$ are the Schmidt coefficients of $\rho$, for $d>0$, $1\leq k\leq d$:
\begin{align}
    (\rho_1\otimes\rho_2)_{id+k}=\left(\sum_{j=0}^\infty \lambda_{jd+k}\right)\left(\sum_{l=1}^d\lambda_{id+l}.\right)
\end{align}
Hence, for $1\leq k\leq d$, 
\begin{align}
    (\rho_1\otimes\rho_2)_{k}=\left(\sum_{j=0}^\infty \lambda_{jd+k}\right)\left(\sum_{l=1}^d\lambda_{l}\right).
\end{align}
We deduce,
\begin{align}
    \left|(\rho_1\otimes\rho_2)_{k}-\rho_k\right|=\left|\left(\sum_{j=0}^\infty \lambda_{jd+k}\right)\left(\sum_{l=1}^d\lambda_{l}\right)-\lambda_k\right|\\=\left|\left(\sum_{j=1}^\infty \lambda_{jd+k}\right)\left(\sum_{l=1}^d\lambda_{l}\right)-\lambda_k\sum_{l=d+1}^\infty\lambda_l\right|\\\leq \sum_{j=1}^\infty \lambda_{jd+k}+\lambda_k\sum_{l=d+1}^\infty\lambda_l.
\end{align}
Similarly, for $i\geq 1$:
\begin{align}
\left|(\rho_1\otimes\rho_2)_{id+k}-\rho_{id+k}\right|\leq \left(\sum_{j=0}^\infty \lambda_{jd+k}\right)\left(\sum_{l=1}^d\lambda_{id+l}\right)+\lambda_{id+k}.
\end{align}
Hence,
\begin{align}\|\rho-\rho_1\otimes\rho_2\|_1\leq\sum_{k=1}^d\sum_{j=1}^\infty \lambda_{jd+k}+\sum_{k=1}^d\lambda_k\sum_{l=d+1}^\infty\lambda_l+\sum_{k=1}^d\sum_{i=1}^\infty\left(\left(\sum_{j=0}^\infty \lambda_{jd+k}\right)\left(\sum_{l=1}^d\lambda_{id+l}\right)+\lambda_{id+k}\right)\\\leq4\sum_{l=d+1}^\infty\lambda_l.\end{align}
\end{proof}

The Ryu--Takayanagi formula for admissible states can then be formulated as follows:

\begin{thm}
Let $(\Lambda,(M_\lambda),(\mathcal{E}_{\lambda}),(E_{\lambda\mu}), \omega, V_N)$ be a renormalizable large-$N$ quantum error-correcting code. Let $\ket{\Psi}$ be a (normalized) state in $\mathcal{H}_{\lambda_0}$. Then, for all $\lambda\geq\lambda_0$ such that $\ket{\Psi}$ is admissible for $M_\lambda$, and for all $\varepsilon>0$, there exists a sequence of finite-dimensional Hilbert subspaces $\mathcal{H}_{\lambda,N}^{fin}$ of $\mathcal{H}_\lambda$ such that
\begin{align}
\label{eq:rtformula}
    |S(\ket{\Psi},M_\lambda)+A(\mathcal{H}_{\lambda,N}^{fin})-S(V_N\ket{\Psi},\mathcal{B}(\mathcal{H}_N))|\underset{N\rightarrow\infty}{\longrightarrow}0.
\end{align}
\end{thm}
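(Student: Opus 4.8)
The plan is to deduce the formula from Theorem~\ref{thm:main} by applying it, for each $N$, to the finite-dimensional regulated code subspace $\mathcal{H}_{\Psi,\eta_N}$ furnished by the admissibility data of $\ket{\Psi}$, and then to trade the finite-dimensional approximant $\ket{\Psi_0^{\eta_N}}$ for the true state $\ket{\Psi}$ on both sides of the Ryu--Takayanagi formula at a cost that vanishes as $N\to\infty$. Concretely, since $\ket{\Psi}$ is admissible for $M_\lambda$, fix the threshold sequence $(\eta_N)$ provided by the definition and set $\mathcal{H}_{\lambda,N}^{fin}:=\mathcal{H}_{k(\eta_N)}$, the finite-dimensional tensor factor of $\mathcal{H}_\lambda$ cut out by the Schmidt decomposition of $\ket{\Psi}$ at level $k(\eta_N)$, so that $M^{k(\eta_N)}_\lambda=\mathcal{B}(\mathcal{H}_{\lambda,N}^{fin})$ and $\dim\mathcal{H}_{\lambda,N}^{fin}=k(\eta_N)$. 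The vector $\ket{\Psi_0^{\eta_N}}$, being the natural-cone representative of $\Psi\circ E_{\Psi,k(\eta_N)}$, is normalized, invariant under the (idempotent) conditional expectation $E_{\Psi,k(\eta_N)}$, and lies in $\mathcal{H}_{\Psi,\eta_N}$; thus $(M_\lambda,M^{k(\eta_N)}_\lambda,E_{\Psi,k(\eta_N)},\Psi\circ E_{\Psi,k(\eta_N)},V_N)$ is a one-step (possibly $N$-dependent) renormalizable large-$N$ code of the kind to which Theorem~\ref{thm:main} applies, with $\ket{\Psi_0^{\eta_N}}$ playing the role of the state in the regulated code subspace and the code-subspace dimension $d_\lambda^2$ now replaced by $k(\eta_N)^2$.

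The first thing to check is that Theorem~\ref{thm:main} still goes through when $d_\lambda$ is allowed to grow like $k(\eta_N)$. The only place $d_\lambda$ enters its proof is the concluding Fannes estimate, which now reads $2\delta_N\,\mathrm{log}\big(d_N(k(\eta_N)^2+1)/2\delta_N\big)$; this expands into a sum of $2\delta_N\,\mathrm{log}\,d_N$, $2\delta_N\,\mathrm{log}(k(\eta_N)^2+1)$ and $2\delta_N\,\mathrm{log}(1/2\delta_N)$, and since $\mathrm{log}\,d_N$ is polynomially bounded (the hypothesis of Theorem~\ref{thm:main}, or its Winter-inequality analogue from the previous subsection when $\mathcal{H}_N$ is infinite-dimensional), $\mathrm{log}\,k(\eta_N)$ is polynomially bounded by admissibility, $\delta_N$ decays faster than any polynomial, and $x\,\mathrm{log}(1/x)\to0$ as $x\to0^+$, each piece tends to zero. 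Taking all the unitaries equal to the identity in the admissibility bound, Theorem~\ref{thm:main} yields
\begin{align}
   \big|S(\ket{\Psi_0^{\eta_N}},M^{L,k(\eta_N)}_\lambda)+A(\mathcal{H}_{\lambda,N}^{fin})-S(V_N\ket{\Psi_0^{\eta_N}},\mathcal{B}(\mathcal{H}_N))\big|\underset{N\rightarrow\infty}{\longrightarrow}0.
\end{align}

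It remains to replace $\ket{\Psi_0^{\eta_N}}$ by $\ket{\Psi}$ in the first and third terms. On the boundary side, the lemma above gives $\|\rho-\rho_1\otimes\rho_2\|_1\le4\eta_N$, where $\rho$ is the density operator of $\Psi$ on $M_\lambda$, so $\Psi$ and $\Psi\circ E_{\Psi,k(\eta_N)}$ differ by at most $4\eta_N$ in norm on $M_\lambda$; the standard estimate $\|\xi-\eta\|^2\le\|\omega_\xi-\omega_\eta\|$ for vectors in the natural cone then gives $\|\ket{\Psi}-\ket{\Psi_0^{\eta_N}}\|\le2\sqrt{\eta_N}$, hence, $V_N$ being contractive, the reduced density matrices of $V_N\ket{\Psi}$ and $V_N\ket{\Psi_0^{\eta_N}}$ on $\mathcal{B}(\mathcal{H}_N)$ lie within $O(\sqrt{\eta_N})$ in trace norm, and Fannes' (or Winter's) inequality bounds the corresponding entropy difference by $O(\sqrt{\eta_N})\,\mathrm{log}(d_N/\sqrt{\eta_N})$, which vanishes precisely because admissibility forces $\sqrt{\eta_N}$ to decay faster than $(\mathrm{log}\,d_N)^{-1}$. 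On the bulk side, $S(\ket{\Psi},M_\lambda)=S(\rho)$ while $S(\ket{\Psi_0^{\eta_N}},M^{L,k(\eta_N)}_\lambda)=S(\rho_1)$; since $\rho_1$ is a coarse-graining of $\rho$ one has $S(\rho_1)\le S(\rho)$, whereas subadditivity of von Neumann entropy gives $S(\rho)\le S(\rho_1)+S(\rho_2)$, and $S(\rho_2)\to0$ because $\rho_2$ has a single eigenvalue tending to $1$ (its complementary mass being $\sum_{i>k(\eta_N)}\lambda_i\le\eta_N$) while its entropy is bounded, up to a term tending to zero, by the tail $-\sum_{i>k(\eta_N)}\lambda_i\,\mathrm{log}\,\lambda_i$ of the convergent series $S(\rho)<\infty$. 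Hence $S(\rho_1)\to S(\rho)$, and the triangle inequality combines the three estimates into \eqref{eq:rtformula}.

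The point that genuinely requires care is the bulk-side convergence $S(\rho_1)\to S(\rho)$: von Neumann entropy is not norm-continuous on an infinite-dimensional algebra, so the trace-norm bound $\|\rho-\rho_1\otimes\rho_2\|_1\le4\eta_N$ alone is not enough, and one must exploit the explicit product form $\rho_1\otimes\rho_2$ produced by $E_{\Psi,k(\eta_N)}$ together with the finiteness of $S(\rho)$ to control the discarded piece $S(\rho_2)$. By contrast, the verification that Theorem~\ref{thm:main} survives the growth of $k(\eta_N)$ and the boundary-side replacement of $\ket{\Psi_0^{\eta_N}}$ by $\ket{\Psi}$ are routine once the admissibility thresholds $(\eta_N)$ are fixed.
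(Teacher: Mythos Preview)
Your proof is correct and follows essentially the same route as the paper: reduce to Theorem~\ref{thm:main} on the finite-dimensional regulated space $\mathcal{H}_{\Psi,\eta_N}$ (checking that the growing dimension $k(\eta_N)$ does no harm since $\log k(\eta_N)$ is polynomially bounded), then trade $\ket{\Psi_0^{\eta_N}}$ for $\ket{\Psi}$ on the boundary via the lemma, the natural-cone bound, and Fannes, and on the bulk via the convergence $S(\rho_1)\to S(\rho)$. The one place where you add content is the bulk-side convergence, which the paper simply asserts; your argument via $S(\rho_1)\le S(\rho)\le S(\rho_1)+S(\rho_2)$ and the bound of $S(\rho_2)$ by the tail of the convergent series $-\sum_i\lambda_i\log\lambda_i$ (using that $\rho$ is diagonal in the product basis by construction, so the grouping inequality applies) is the right way to fill this in, and your observation that norm-continuity alone would not suffice here is well taken.
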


\begin{proof}
If $\Psi_0^{\eta_N}$ denotes $\Psi\circ E_{\Psi,k(\eta_N)}$, by the standard continuity bound for the mapping of this state onto the natural cone, 
\begin{align}
 \|\ket{\Psi}-\ket{\Psi_0^{\eta_N}}\|\leq 2\sqrt{\sum_{l=k(\eta_N)+1}^\infty\lambda_l}.
\end{align}
Now, as $V_N$ is a contraction,
\begin{align}
    \|V_N\ket{\Psi}-V_N\ket{\Psi_0^{\eta_N}}\|\leq 2\sqrt{\sum_{l=k(\eta_N)+1}^\infty\lambda_l}.
\end{align}
By Fannes' inequality and Lemma C.1 of \cite{Akers_2022}, it follows that
\begin{align}
    |S(V_N\ket{\Psi},\mathcal{B}(\mathcal{H}_N))-S(V_N\ket{\Psi_0^{\eta_N}},\mathcal{B}(\mathcal{H}_N))||\leq 4\sqrt{\eta_N}\mathrm{log}\left(\frac{d_N}{4\sqrt{\eta_N}}\right).
\end{align}
for $N$ large enough, and this goes to zero at infinity. Then the entropy of $V_N\ket{\Psi_0^{\eta_N}}$ is controlled by exactly the same technique as in the proof of Theorem \ref{thm:main}. The only places where one should worry are the ones where the dimension $k(\eta_N)$ appears because it now grows with $N$, but since $\mathrm{log}\, k(\eta_N)$ grows at most polynomially, the obtained bounds are still strong enough. Moreover the difference in entropy of $\ket{\Psi_0^{\eta_N}}$ on $M_\lambda^{k(\eta_N)}$ and of $\ket{\Psi}$ on $M_\lambda$ goes to zero, which concludes the proof.
\end{proof}

Also note that an analogous result could have been formulated for admissible states for the case of an infinite-dimensional boundary Hilbert space, by introducing a condition allowing to apply Winter's inequality like in the previous subsection. Another remark is that as noted earlier, the code subspace renormalization scheme itself can depend on $N$. In that case, the whole proof goes through, except the last step in which it is assumed that the entropies of the regulated subalgebras get asymptotically close to the one of the type $I_\infty$ factor, because now this $I_\infty$ factor and the corresponding Schmidt coefficients are also $N$-dependent. One then needs to add this condition by hand in the definition of admissible state.

\subsection{Large codes and minimality}

It is natural to try to generalize the methods developed in this paper to codes subspaces with faster, exponential growth in $N$. This regime is also of great interest in order to study black holes and their evaporation \cite{Hayden_2019,Akers_2022,https://doi.org/10.48550/arxiv.2207.06536}. However, in this case, some assumptions made before are no longer reasonable. In particular, supposing that the map $V_N$ is very close in norm to an isometry (which was necessary in order to use Winter's inequality) no longer makes sense, because the code subspace becomes too big and $V_N$ can get a kernel. Also, unitary reconstruction in the entanglement wedge for all unitaries can no longer be true, as the code subspace can now carry enough entropy to compete with the area term and macroscopically shift the position of the Ryu--Takayanagi surface. The Ryu--Takayanagi formula then really needs to become a quantum extremal surface formula.

Fortunately, the setup of \cite{Akers_2022} allows for such a generalization, by subdividing the code subspace into a tensor product of further finite-dimensional subspaces, that are then interpreted as local degrees of freedom, and only asking for the reconstruction of product unitaries. It would be very interesting to understand how such a factorization, or a similar regulation procedure, can be systematically implemented in the large $N$ theory. Assuming such a refined structure for the renormalized code subspace, that the boundary Hilbert space is finite-dimensional, and that the logarithm of its dimension is polynomial in $N$, the results of \cite{Akers_2022} can be applied in this setup without any modification, including the result about minimality of generalized entropy in the entanglement wedge. 

A technical detail is that it is important that Fannes' inequality also holds for subnormalized states for the proof of \cite{Akers_2022} to work, as it is no longer an option to assume that $V_N$ is very close to an isometry for a very large code subspace. It would be interesting to figure out whether an analog of Winter's inequality holds for subnormalized states, and some result of this kind would become important in this regime in order to drop the finite-dimension assumption on $\mathcal{H}_N$.

At any rate, large codes seem to require even more work than the ones discussed before in order to concretely implement regularizations and a code subspace renormalization scheme that allow to formulate a family of quantum extremal surface formulae that satisfy the Susskind--Uglum prescription. It is an important question to understand how to implement such regularizations explicitly, first for small codes and then for larger codes. Understanding perturbative $G_N$ corrections and the split property in the framework of \cite{FaulknerLi} seems to be a very promising first step in this direction.

\section{Discussion}
\label{sec:Discuss}

In this paper, I introduced a new derivation of the Ryu--Takayanagi formula in the large $N$ limit of holography. This new setup makes it manifest that not all of the large $N$ von Neumann algebra can be reconstructed satisfactorily at fixed $N$ on the boundary. Instead, it must be regulated (into a type $I$ algebra if one wants to retain a finite amount of code subspace entropy in the large $N$ limit), and with this regulation comes an arbitrary choice of UV cutoff.

I then argued that an appropriate way to define the renormalization group flow between the different code subspaces (or subalgebras) is through conditional expectations that integrate out some of the degrees of freedom. From such a family of conditional expectations, one can then construct a nested family of code subspaces, each of which contributes a different amount of entropy, that decreases when high energy modes are integrated out. This nested structure reacts well with the commutant structure, so that a good notion of complementary recovery can be defined in a consistent way. It is interesting to note that the use of conditional expectations, which was previously proposed as a model of exact holography, finds a new interpretation as a renormalization group flow in the limit where the code becomes exact.

The next and last step was to prove a Ryu--Takayanagi formula: it was shown that for a state in a regulated code subspace with admissible properties with respect to the bulk-to-boundary map, the Ryu--Takayanagi formula is satisfied, with the area term associated to the code space being identified with the entropy of a Choi--Jamiolkowski state, along the lines of \cite{Akers_2022}. Then, as the bulk cutoff is changed, the variation of the area term was proven to exactly compensate that of the bulk entropy, therefore providing a full proof of the Susskind--Uglum proposal. In this new framework of quantum error correction, this proposal can be reinterpreted as a precise instance of the ER=EPR paradigm.

A few possible extensions of the result would be nice to obtain:
\begin{itemize}
    \item It is not clear how to easily single out a UV-regulated type $I$ algebra at large $N$. The most naive guess, which is to take products of single trace operators capped off at some finite number of factors, fails because these operators do not form an algebra. It would be nice to understand better how to extract type $I$ algebras carrying a finite amount of entropy from a large $N$ algebra. More generally, it would be very interesting to embed the results of this paper into the framework of asymptotically isometric codes initiated in \cite{FaulknerLi}.
    \item It also seems important to extend the reasonings provided here to code subspaces that are not made out of states that are invariant under a conditional expectation, or that only are in some approximate way. This would probably be useful to better understand how area laws work in the examples of large $N$ sectors introduced in \cite{FaulknerLi}.
    \item Perhaps more ambitiously, one could try to generalize the results presented here to the case of linear spaces of operators with no product structure. Some preliminary attempts \cite{Ghosh_2018,https://doi.org/10.48550/arxiv.2202.03357} have been made in this direction in the literature.
    \item In the proofs appearing in this paper, either Fannes' or Winter's inequalities provided bounds on the entropies of the boundary states. It would be interesting to understand the assumptions required for the application of Winter's inequality better, and to derive them in explicit examples. It would also be nice to understand to what extent Winter's inequality generalizes to the case of subnormalized states.
    \item This proof most closely mimics the case of one side of a two-sided holographic black hole. It is also necessary to tackle the extra regularizations needed for the case where one considers a subregion of a CFT, and the area of the quantum extremal surface become infinite. Perhaps the canonical purification of \cite{https://doi.org/10.48550/arxiv.1905.00577} can be used to define such regulations more precisely.
    \item One can also allow the code subspace entropy to diverge as $N$ becomes large. Maybe one can identify some $N$-dependent regularizations of the code subspace that asymptote to von Neumann algebras that have type $II$ or $III$, and it would be nice to understand this better.
    \item More generally, in the case of a large code, this paper only scratched the surface of the problem, by directly applying the setup of \cite{Akers_2022}. It is an interesting problem to understand how to subdivide a large $N$ algebra into ($N$-dependent) local degrees of freedom in the bulk, so that conditions on product unitaries, or an analog of them, can be formulated in a meaningful way. The case of large codes is especially important to describe state-dependent black hole reconstruction and black hole evaporation \cite{Hayden_2019,Penington:2019npb,Gesteau:2021jzp,https://doi.org/10.48550/arxiv.2207.06536}, which makes it a particularly interesting avenue of research.
    \item It also seems very important to see how the equivalence between large $N$ entanglement and areas contributions arises in explicit models of quantum gravity. The SYK model \cite{Chandrasekaran:2022qmq}, matrix quantum mechanics \cite{Milekhin:2020zpg}, or maybe some supersymmetric field theories seem to be cases in which one could try to implement a renormalization scheme akin to the one identified in this paper.
\end{itemize}

\section*{Acknowledgements}
I am grateful to Chris Akers, Juan Felipe Ariza Mejia, Adam Artymowicz, Charles Cao, Tom Faulkner, Alex Jahn, Matilde Marcolli, Daniel Murphy, Geoff Penington, Leonardo Santilli and David Simmons-Duffin for discussions and correspondence. I would also like to thank Chris Akers, Tom Faulkner, Matilde Marcolli and Daniel Murphy for comments on a draft of this paper.

\appendix
\section{Bounded entropy implies type $I$}
\label{sec:BoundedTypeI}
This work mainly considers the case in which the code subspace of the holographic code contains a \textit{finite} $O(1)$ amount of entropy in the large $N$ limit. Of course, this is impossible if the full algebra of observables in a region is taken into account due to UV divergences, but as was stressed above, this calculation would not be physical anyway because the code would break at a scale that is parametrically large in $N$. However, if one regulates the algebra of the EFT by simply considering one of its subalgebras that carries finite entropy, it is possible to make sense of bulk entropy in the large $N$ limit. The goal of this appendix is to characterize those algebras that can carry a finite amount of entropy. The answer turns out to be that these algebras must have type $I$. This can already be guessed at an intuitive level from the trace structure of von Neumann algebras: the only algebras that have a non-renormalized trace are type $I$ algebras.

A more precise result can be shown by adapting an argument formulated by Matsui \cite{matsui2013boundedness} in the context of the mathematical study of spin chains: if there is a cyclic separating state on the Hilbert space whose entropy for a finite-dimensional resolution of $M$ is bounded, then $M$ has type $I$. The rest of this appendix is dedicated to the proof of the following theorem:

\begin{thm}
Let $M$ be a von Neumann factor acting on a Hilbert space $\mathcal{H}$, and let $\ket{\Psi}\in\mathcal{H}$ be cyclic separating with respect to $M$. Suppose that there exists an increasing sequence of finite-dimensional unital simple subalgebras $(M_n)_{n\in\mathbb{N}}$ of $M$ such that \begin{align}\underset{n\in\mathbb{N}}{\mathrm{sup}}\,S(\ket{\Psi},M_n)<\infty,\end{align}
and $M$ is the closure of the union of the $M_n$ for the weak operator topology. Then, $M$ has type $I$.
\end{thm}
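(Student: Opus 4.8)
Write $\omega(\,\cdot\,) = \langle \Psi, \cdot\,\Psi\rangle$ for the induced faithful normal state on $M$, and for each $n$ let $\rho_n \in M_n$ be the density matrix of $\omega|_{M_n}$ under an identification $M_n \cong M_{d_n}(\mathbb{C})$, so that $S(\ket{\Psi},M_n) = S(\rho_n) \le C := \sup_m S(\ket{\Psi},M_m)$. The strategy is to adapt Matsui's argument on boundedness of entanglement entropy for quantum spin chains \cite{matsui2013boundedness}: the uniform bound $S(\rho_n)\le C$ forces the states $\omega|_{M_n}$ to be ``uniformly almost finite rank'', a rigidity incompatible with $M$ being of type $II$ or $III$. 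The model case to keep in mind is an ITPFI factor $\overline{\bigotimes_k (M_{n_k},\phi_k)}$, whose type is governed by $\sum_k (1-\mu^{(k)}_1)$ (with $\mu^{(k)}_1$ the top eigenvalue of $\phi_k$) --- type $I$ precisely when this is finite --- while the entanglement entropy of the product state is $\sum_k S(\phi_k)$, so that finite entanglement entropy and type $I$ are essentially synonymous.

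First I would make the entropy bound quantitative. From $S(\rho) \ge -\log\|\rho\|_\infty$, each $\rho_n$ has top eigenvalue $\lambda^{(n)}_1 \ge e^{-C}$; letting $e_n \in M_n \subseteq M$ be the corresponding minimal projection of $M_n$, one gets projections with $\omega(e_n) \ge e^{-C}=:c>0$ for all $n$. More generally, sorting the eigenvalues $\lambda^{(n)}_1 \ge \lambda^{(n)}_2 \ge \cdots$, the elementary estimate $C \ge S(\rho_n) \ge (\log k)\sum_{j>k}\lambda^{(n)}_j$ (using $\lambda^{(n)}_j \le 1/k$ for $j>k$) yields the uniform tail bound $\sum_{j>k}\lambda^{(n)}_j \le C/\log k$; hence for every $\varepsilon>0$ there are $k(\varepsilon)$, independent of $n$, and projections $p^\varepsilon_n \in M_n$ of rank at most $k(\varepsilon)$ with $\omega(p^\varepsilon_n)\ge 1-\varepsilon$.

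Next, the structural decomposition and the type $II_1$ case. Treating the case $1_{M_n}=1_M$ (the general case reduces to it after cutting by the $1_{M_n}$), unitality of $M_n\subseteq M_{n+1}$ gives $M_{n+1}=M_n\otimes(M'_n\cap M_{n+1})$ and, iterating, $M=M_n\otimes M^{(n)}$ with $M^{(n)}:=M'_n\cap M$ a hyperfinite subfactor of the same type as $M$ (tensoring with the type $I_{d_n}$ factor $M_n$ preserves the type); if some $M^{(n)}$ is finite-dimensional then $M$ is of type $I$, so we may assume $d_n\to\infty$. Suppose $M$ were of type $II_1$, with normalized trace $\tau$. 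Then $\tau(p^\varepsilon_n)=\bigl(\mathrm{rank}_{M_n}p^\varepsilon_n\bigr)/d_n\le k(\varepsilon)/d_n\to 0$; passing to a $\sigma$-weak limit point $a^\varepsilon\in M$ of $(p^\varepsilon_n)_n$, normality of $\tau$ and of $\omega$ forces $\tau(a^\varepsilon)=0$, hence $a^\varepsilon=0$, contradicting $\omega(a^\varepsilon)\ge 1-\varepsilon$. This excludes type $II_1$.

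The crux is the remaining cases, $M$ of type $II_\infty$ or type $III$, and this is where the real work lies: there is no finite trace, and in a properly infinite factor every minimal projection of $M_n$ is already an infinite projection of $M$, so ``rank in $M_n$'' no longer controls ``$\tau$-size in $M$''. Here I would follow Matsui more closely. For type $II_\infty$: writing the $L^1(M,\tau)$ density $h$ of $\omega$, the spectral projections $f_R:=\chi_{[1/R,\infty)}(h)$ are finite projections with $\omega(f_R)\to 1$, so one would like to pass to the type $II_1$ corner $f_R M f_R$ and invoke the previous step; the obstacle is that the given filtration does not restrict to $f_R M f_R$, so a compatible matrix-algebra filtration of the corner with bounded-entropy restrictions must be constructed. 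For type $III$ the natural route is via Connes' invariants: show that the uniform almost-finiteness of the $\omega|_{M_n}$ forces the asymptotic ratio set / flow of weights of $M$ to be trivial, i.e. $M$ semifinite, contradicting type $III$ --- concretely, bounding uniformly in $n$ the relative modular data relating $\omega$ to its finite-rank truncations $\omega(p^\varepsilon_n\,\cdot\,p^\varepsilon_n)$, which is exactly the mechanism behind the identity ``entanglement entropy $=\sum_k S(\phi_k)$'' in the ITPFI model. The principal difficulty --- and the reason the spin-chain proof does not transcribe verbatim --- is that $\omega$ need not be a product state over the increments $M'_{n-1}\cap M_n$, so the quasi-equivalence of the GNS representation with a type $I$ representation, and hence the exclusion of types $II_\infty$ and $III$, must be re-established coordinate-freely, presumably through Powers--St{\o}rmer estimates on $\rho_n^{1/2}$ and the resulting approximate intertwiners.
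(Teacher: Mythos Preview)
Your proposal is genuinely incomplete, and you are candid about this: the exclusion of type $II_1$ via the trace argument is clean and correct, but the $II_\infty$ and $III$ cases are only sketched, and the obstacles you identify (no finite trace, the filtration not restricting to a finite corner, $\omega$ not being a product over the increments $M'_{n-1}\cap M_n$) are real and not resolved by what you wrote. The Powers--St{\o}rmer idea you mention at the end is pointing in the right direction, but you have not carried it out.

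The paper's proof sidesteps the case analysis entirely, and the key shift is \emph{which} tensor decomposition one works with. Instead of the internal splitting $M=M_n\otimes(M_n'\cap M)$, the paper uses Tomita--Takesaki: since $\ket{\Psi}$ is cyclic separating, $M'=JMJ$, and $\mathcal{H}$ is the GNS space of the \emph{pure} state $\psi(X\otimes JYJ)=\langle\Psi,XJYJ\,\Psi\rangle$ on the $C^\ast$-algebra $\mathcal{A}\otimes\mathcal{A}^{op}$, where $\mathcal{A}=\overline{\bigcup M_n}^{\|\cdot\|}$. The Schmidt decomposition of $\ket{\Psi}$ with respect to $\mathcal{H}=\mathcal{H}_n\otimes\mathcal{H}_n^c$ (coming from $M_n$ being a type $I$ subfactor) and your uniform tail bound --- which the paper proves in essentially the same form, as its Lemma \ref{lem:entropybounds} --- are then used to produce truncated vectors $\Omega(n)$ of bounded Schmidt rank $K$ with $\|\Omega(n)-\ket{\Psi}\|^2\le 2\varepsilon$. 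Any weak-$^\ast$ accumulation point $\omega_\infty$ of the associated functionals is dominated by a finite sum $\sum_j\bar\lambda_j\,\psi_{j,R}\otimes\varphi_{j,L}$, hence quasi-equivalent to $\Psi_R\otimes\Psi_L$; and since $\|\omega_\infty-\psi\|\le 2\sqrt{2\varepsilon}$, a norm-perturbation result for factor states forces $\psi$ and $\Psi_R\otimes\Psi_L$ to be quasi-equivalent. Now $\psi$ is pure, so it is type $I$; a tensor product is type $I$ only if both factors are, so $\Psi_R$ is a type $I$ factor state on $\mathcal{A}$; finally the GNS representation of $\Psi_R$ sits inside the factor representation generating $M$, hence is quasi-equivalent to it, and $M$ is type $I$.

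The conceptual point you were missing is that the ``other side'' in Matsui's argument is not the relative commutant $M_n'\cap M$ but the modular conjugate $JMJ$. This is exactly what makes the state pure on the doubled algebra, and it is why no product structure of $\omega$ over the increments is needed: one proves a split property for $\psi$ across the $M$--$M'$ cut, not across the $M_n$--$(M_n'\cap M)$ cut. Your entropy bounds and finite-rank projections $p_n^\varepsilon$ are precisely the raw material, but they should feed into a quasi-equivalence argument on $\mathcal{A}\otimes\mathcal{A}^{op}$ rather than into separate exclusions of types $II_\infty$ and $III$.
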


\begin{proof}
For this proof, it will be easier to recast the problem at the level of $C^\ast$-algebras and norm closures, which is the goal of this first lemma.

\begin{lem}
Let $\mathcal{A}$ be the norm closure of the union of the $M_n$. Then, $\mathcal{H}$ is isomorphic to the GNS representation of $\mathcal{A}\otimes\mathcal{A}^{op}$\footnote{The opposite algebra is isomorphic to $J\mathcal{A}J$.} in the state defined by
$\psi(X\otimes JYJ):=\bra{\Psi}XJYJ\ket{\Psi}$.
\end{lem}
\begin{proof}
The vector $\ket{\Psi}$ is cyclic separating with respect to $M$, so one can apply Tomita--Takesaki theory to obtain the characterization of the commutant 
\begin{align}
M^\prime=JMJ,
\end{align}
where $J$ is the modular conjugation associated to $\ket{\Psi}$. 

Now let $\mathcal{A}$ be the be the norm closure of the union of the $M_n$. The (nuclear) $C^\ast$-algebra $\mathcal{A}\otimes\mathcal{A}^{op}$ has a representation on $\mathcal{H}$ sending the operator $A\otimes JBJ$ to $AJBJ$. This representation coincides with the GNS representation of the state defined by the composition of $\psi$ and this representation. The Hilbert space is isomorphic to $\mathcal{H}$, and the respective images of $\mathcal{A}\otimes Id$ and $Id\otimes\mathcal{A}^{op}$ are strong operator dense in $M$ and its commutant respectively, which makes the proof of the lemma straightforward.
\end{proof}

Now, the goal is to show that the state $\psi$ is quasiequivalent to a tensor product of the form $\psi_R\otimes\psi_L$ on $\mathcal{A}\otimes\mathcal{A}^{op}$. In order to do this, I will adapt a proof due to Matsui \cite{matsui2013boundedness} in the context of spin chains to the present case.

For $n\in\mathbb{N}$, $M_n$ is a type $I$ factor acting on $\mathcal{H}$, so one can write (see \cite{Stratila_2020}, paragraph 9.15)
\begin{align}
\mathcal{H}=\mathcal{H}_n\otimes\mathcal{H}_n^c,
\end{align}
where 
\begin{align}
M_n=\mathcal{B}(\mathcal{H}_n)\otimes Id.
\end{align}
With respect to this factorization, the state $\ket{\Psi}$ admits a Schmidt decomposition
\begin{align}
    \ket{\Psi}=\sum_{j=1}^{d_n}\sqrt{\lambda_j^{(n)}}\,\xi_j^{(n)}\otimes\eta_j^{(n)},
\end{align}
with $\xi_j^{(n)}\in\mathcal{H}_n$ and $\eta_j^{(n)}\in\mathcal{H}_n^c$, $1\geq\lambda_1^{(n)}\geq\dots\geq\lambda_{d_n}^{(n)}\geq 0$, and $\sum_{j=1}^{d_n}\lambda_j^{(n)}=1$.

Let \begin{align}S:=\underset{n\in\mathbb{N}}{\mathrm{sup}}\,S(\ket{\Psi},M_n).\end{align}
\begin{lem}
\label{lem:entropybounds}
Let $1>\varepsilon>0$. If $k$ is the integer defined by 
\begin{align}
\sum_{j=k}^{d_n}\lambda_j^{(n)}\geq\varepsilon
\end{align}
and 
\begin{align}
\sum_{j=k+1}^{d_n}\lambda_j^{(n)}<\varepsilon,
\end{align}
then 
\begin{align}
\label{eq:kbound}
k\leq\mathrm{exp}\left(\frac{S}{\varepsilon}\right),
\end{align}
and
\begin{align}
\label{eq:lambda1bound}
\lambda_1^{(n)}\geq\mathrm{exp}\left(-\frac{S}{\varepsilon}\right).
\end{align}
\end{lem}
\begin{proof}
As the logarithm is an increasing function, 
\begin{align}
-\varepsilon\,\mathrm{log}\lambda_k^{(n)}\leq -\sum_{j=k}^{d_n}\lambda_j^{(n)}\mathrm{log}\lambda_k^{(n)}\leq S.
\end{align}
This already proves \eqref{eq:lambda1bound}. Then it suffices to note that
\begin{align}
k\lambda_k^{(n)}\leq\sum_{j=1}^{k}\lambda_j^{(n)}\leq 1
\end{align}
to obtain the other bound \eqref{eq:kbound}.
\end{proof}

Now, let us relabel the two tensor factors in $\mathcal{A}\otimes\mathcal{A}^{op}$ by $\mathcal{A}_R$ and $\mathcal{A}_L$. We have:

\begin{lem}
\label{lem:quasiequivalent}
Let $\psi_j^{(n)}$ be an extension of $\xi_j^{(n)}$ to $\mathcal{A}_R$, and let $\varphi_j^{(n)}$ be an extension of $\eta_j^{(n)}$ to $\mathcal{A}_L$. Consider $\psi_{R,j}$ and $\varphi_{L,j}$, two weak-$^\ast$ limits of the sequences $\psi_j^{(n)}$ and $\varphi_j^{(n)}$. Up to passing to a subsequence one can also assume convergence of the $\lambda_j^{(n)}$. If $\underset{n\rightarrow\infty}{\mathrm{lim}}\lambda_j^{(n)}$ is nonzero (which is the case for $\lambda_1$), then $\psi_{R,j}$ is quasi-equivalent to $\Psi_R$, and $\varphi_{L,j}$ is quasi-equivalent to $\Psi_L$, where $\Psi_R$ and $\Psi_L$ are the restrictions of the state $\ket{\Psi}$ to the algebras $\mathcal{A}_R$ and $\mathcal{A}_L$ respectively.
\end{lem}

\begin{proof}
It suffices to note that on any $M_n$, 
\begin{align}
\Psi_R=\sum\lambda_j^{(n)}\psi_{j}^{(n)}\geq\lambda_{j_0}^{(n)}\psi_{j_0}^{(n)}
\end{align}
for all $j_0$. By going to the limit, 
\begin{align}
    \left(\underset{n\rightarrow\infty}{\mathrm{lim}}\lambda_{j_0}^{(n)}\right)\psi_{R,j_0}\leq\Psi_R.
\end{align}
This inequality applied to $\underset{n\rightarrow\infty}{\mathrm{lim}}\lambda_1^{(n)}\neq 0$, together with the fact that the GNS representation associated to $\Psi_R$ is a factor by assumption, shows that the two representations are quasi-equivalent. The same reasoning can be applied to $\mathcal{A}_L$.
\end{proof}

We are now ready to show that $\Psi$ is quasi-equivalent to $\Psi_L\otimes\Psi_R$. Lemma \ref{lem:quasiequivalent} shows that it is enough to demonstrate that it is quasi-equivalent to $\Psi_{L,1}\otimes\Psi_{R,1}$. For this last part of the proof, I will closely follow the notations of \cite{matsui2013boundedness}.

Let $1>\varepsilon >0$, and let $K$ be defined as the largest integer smaller or equal to $\mathrm{exp}\left(\frac{S}{\varepsilon}\right)$. Define the vectors (for notational simplicity, bra-ket notation is not explicitly used for these) \begin{align}\tilde{\Omega}(n):=\sum_{j=1}^K\sqrt{\lambda_j^{(n)}}\xi_j^{(n)}\otimes\eta_j^{(n)},\end{align} and 
\begin{align}\Omega(n):=\frac{\tilde{\Omega}(n)}{\|\tilde{\Omega}(n)\|}.\end{align}
We then have that 
\begin{align}
    0<1-\|\tilde{\Omega}(n)\|^2<\varepsilon,
\end{align}
\begin{align}
    1-\|\tilde{\Omega}(n)\|<\frac{\varepsilon}{1+\|\tilde{\Omega}(n)\|}<\varepsilon,
\end{align}
and
\begin{align}
    \|\tilde{\Omega}(n)-\ket{\Psi}\|^2<\varepsilon.
\end{align}
\begin{align}
    \|\Omega(n)-\ket{\Psi}\|^2=\left(\frac{1}{\|\tilde{\Omega}(n)\|}-1\right)^2\left(\sum_{j=1}^{K}\lambda_j^{(n)}\right)+\sum_{j=K+1}^{d_n}\lambda_j^{(n)}=(\|\tilde{\Omega}(n)\|-1)^2+\sum_{j=K+1}^{d_n}\lambda_j^{(n)}\leq 2\varepsilon.
\end{align}
Now consider $\omega_\infty$, a weak-$^\ast$ accumulation point of the $\omega_n$, linear functionals associated to the $\Omega(n)$. It follows that 
\begin{align}
    \|\omega_n-\Psi\|\leq2\sqrt{2\varepsilon},
\end{align}
and going to the limit,
\begin{align}
\label{eq:closeness}
    \|\omega_\infty-\Psi\|\leq2\sqrt{2\varepsilon}.
\end{align}

Now, the Cauchy--Schwarz inequality gives on any $M_{n_0}\otimes\mathcal{A}_L$, for $n\geq n_0$,
\begin{align}
    \omega_n\leq \frac{K}{1-\varepsilon}\sum_{j=1}^K\lambda_j^{(n)}\psi_j^{(n)}\otimes\varphi_j^{(n)}.
\end{align}
Going once again to the limit (one can suppose that all the $\lambda_j^{(n)}$ converge), we get, on a dense set and hence on the full $\mathcal{A}_R\otimes\mathcal{A}_L$, 
\begin{align}
\label{eq:omegainequality}
    \omega_\infty\leq \frac{K}{1-\varepsilon}\sum_{j=1}^{K_0}\bar{\lambda}_j\psi_{j,R}\otimes\varphi_{j,L},
\end{align}
where the $\bar{\lambda}_j$ are the $K_0\leq K$ nonzero limits of the $\lambda_j^{(n)}$.
If one defines a state $\tilde{\Psi}$ such that 
\begin{align}
    C\tilde{\Psi}=\frac{K}{1-\varepsilon}\sum_{j=1}^{K_0}\bar{\lambda}_j\psi_{j,R}\otimes\varphi_{j,L},
\end{align}
for some constant $C$, then $\tilde{\Psi}$ is a linear combination of the $\psi_{j,R}\otimes\varphi_{j,L}$ with nonzero coefficients, so it is quasi-equivalent to $\Psi_R\otimes\Psi_L$, by Lemma \ref{lem:quasiequivalent}. As a consequence, it is a factor state, and $\omega_\infty$ is quasiequivalent to it by Equation \eqref{eq:omegainequality}. Equation \eqref{eq:closeness}, combined with Theorem 2.7 of \cite{10.2307/1970364}, then implies by the triangle inequality that $\Psi$ is quasiequivalent to $\Psi_R\otimes\Psi_L$. 

From this last fact, one can deduce that the von Neumann algebra $M$ has type $I$. Indeed, $\Psi$ is pure so it is type $I$. It is quasiequivalent to the tensor product of $\Psi_R$ and $\Psi_L$, which means that both must also be type $I$ (as a tensor product is type $I$ if and only if both factors are type $I$). Now $M$ corresponds to a factor representation, so every one of its subrepresentations is quasiequivalent to it. In particular, this is true for the GNS representation of $\Psi_R$, which shows that $M$ has type $I$  (see Lemma 4.3 of \cite{Naaijkens_2022}).
\end{proof}

\bibliographystyle{myJHEP}
\bibliography{references}
\end{document}